\newtheorem{theorem}{Theorem}
\newtheorem{lemma}{Lemma}
\newtheorem{coro}{Corollary}
\newcommand{\Rmnum}[1]{\expandafter\@slowromancap\romannumeral #1@}
\definecolor{mybackground}{RGB}{204,232,207}
	\title{Online Resource Allocation for Semantic-Aware  Edge Computing Systems}
\author{
	Yihan Cang, Ming Chen, Zhaohui Yang, Yuntao Hu, Yinlu Wang,\\ Chongwen Huang, and Zhaoyang Zhang
	\IEEEcompsocitemizethanks{
		\IEEEcompsocthanksitem Y. Cang, M. Chen, Y. Hu and Y. Wang are with the National Mobile Communications Research Laboratory, Southeast University, Nanjing 210096, China (e-mails: yhcang@seu.edu.cn, chenming@seu.edu.cn,  huyuntao@seu.edu.cn, yinluwang@seu.edu.cn).  Ming Chen is also with the Purple Mountain Laboratories, Nanjing 211100, China.
		\IEEEcompsocthanksitem
		Z. Yang, C. Huang and Z. Zhang are with College of Information Science and Electronic Engineering, Zhejiang University, Hangzhou 310027, China, and with International Joint Innovation Center, Zhejiang University, Haining 314400, China, and also with Zhejiang Provincial Key Laboratory of Info. Proc., Commun. \& Netw. (IPCAN), Hangzhou 310027, China (e-mails: yang\_zhaohui@zju.edu.cn,  chongwenhuang@zju.edu.cn, ning\_ming@zju.edu.cn).}
	\vspace{-1.5em}
}
\begin{document}
	\maketitle

 	\vspace{-0cm}
	\begin{abstract}
    Mobile edge computing (MEC) in the next generation networks will provide computation services at the network edge to enrich the capabilities of mobile  devices and lengthen their  battery lives. However, the performance of MEC cannot be guaranteed, when large size local tasks are uploaded to the server simultaneously causing network congestion. 
    As a new paradigm that focuses on transmitting the meaning of messages, semantic communications reveals the significant potential to reduce the network traffic. 
    In this paper, we propose a semantic-aware joint communication and computation resource allocation framework  for MEC systems. 
    In the considered system, random tasks arrive at each terminal device (TD), which needs to be computed locally or offloaded to the MEC server.
    To further release the transmission burden, each TD sends the small-size extracted semantic information of tasks to the server instead of the original large-size raw data.
    An optimization problem of joint semantic-aware division factor, communication
    and computation resource management is formulated.
    The problem aims to minimize the energy consumption of the whole system,
    while satisfying long-term delay and processing rate constraints.
    To solve this problem, 
    an online low-complexity algorithm is proposed.
    In particular,  Lyapunov
    optimization is utilized to decompose the original coupled long-term problem into a series of decoupled deterministic problems without
    requiring the realizations of future task arrivals and channel gains.
    Then, the block coordinate descent method and successive
    convex approximation  algorithm are adopted to solve
    the current time slot deterministic problem by observing the
    current system states. 
    Moreover, the closed-form optimal solution of each optimization variable is provided.
    Simulation results show  that the proposed algorithm yields up to $41.8\%$ energy reduction compared
    to its counterpart without semantic-aware allocation. 
	\end{abstract}
	\begin{IEEEkeywords}
		Mobile edge computing, semantic communications,  stochastic optimization, resource management, \textcolor{black}{next generation multiple access}.
	\end{IEEEkeywords}
 \maketitle

	\section{Introduction}
\textcolor{black}{Due to the fast development of the Internet, the desire of numerous terminal devices for a huge amount of emerging computation  services has drastically increased the traffic of core networks.  Numerous emerging services such as virtual reality (VR), augmented reality (AR) and automatic derive put forward higher request on the next generation multiple access (NGMA)  \cite{10110013}. As a result, NGMA has to accommodate multiple devices simultaneously and satisfies devices' quality-of-service (QoS), which will pose a major challenge on next generation communication networks \cite{9133094,9693417}. 
In order to release the burden of core networks, mobile edge computing (MEC) enables the network functions at the network edge, providing users with nearby real-time computing services \cite{7931566}. In future wireless networks, MEC is considered a promising paradigm to remedy the problem of computational resource and energy shortage of mobile equipment \cite{9113305}. }
In conventional MEC scenarios, communication and computation resources are jointly allocated to improve the one-shot system utility such as energy consumption \cite{8264794}, delay \cite{7997360}, throughput \cite{8249785}, computation efficiency \cite{9771419},  considering instantaneous states of the current system. However, in practical wireless communication networks, the system states including channel gains \cite{6893054}, task arrival\cite{7842160} vary with time, requiring long-term system utility optimization scheme designs. Besides, the large volume of tasks uploading will degrade the performance of MEC networks. 

Recently, there are many contributions about dynamic resource allocation and task offloading strategies for MEC systems through using stochastic optimization methods. In \cite{7842160}, the authors minimized the sum power of the system while guaranteeing task buffer stability based on Lyapunov optimization. Both theoretical analysis and numerical results verified the tradeoff between power consumption and execution delay. 
Moreover, a dynamic throughput maximum algorithm based on perturbed Lyapunov optimization was proposed in \cite{9242286} for wireless powered MEC systems. 
Through  applying extreme value theory, the authors in \cite{9040668} and\cite{8638800} proposed dynamic task offloading and resource allocation schemes to satisfy users' ultra-reliable low-latency demands. In work \cite{9449944}, the authors proposed a joint learning and optimization framework, combining the advantages of Lyapunov optimization and deep reinforcement learning to effectively schedule task offloading decisions.  
However, the above works  \cite{7842160,9242286,9040668,8638800,9449944} all ignored the download transmission period, even though the amount of result bits in the downlink is large and downlink energy consumption is inevitable such as in online gaming, argument reality, and video delivery scenarios.  Moreover, one prominent feature of the traffic in MEC systems is asymmetric. In other words, the data sizes to be uploaded and downloaded are distinctly different, which facilitates the implementation of the time division duplexing (TDD) scheme. 
Previous works \cite{8895788,8962105,9129054} have studied the TDD configuration problems in conventional MEC systems. Considering the
dynamic MEC environment, it is essential to further investigate the management of joint time division, communication and computation resource. 

\textcolor{black}{Based on Shannon's classic information theory, masses of existing works including \cite{9449944,7553459,7997360,9771419,8249785,8986845,6893054,7842160,9242286,9328479,7956189,9435770,9040668,8638800,9036074,8895788,8761346,8962105,9129054} are dedicated to the  research on data-oriented communications. As a novel paradigm that involves the 
meaning of messages in communication, semantic communications have sparked great interest from both industry and academia turning to investigate content-oriented communications. Semantic communications have revealed the significant potential to
reduce the network traffic and thus alleviate spectrum shortage\cite{9530497,10024766,10158994}. So far, various semantic communication systems have been elaborated for distinct types of sources, including text \cite{9398576}, image \cite{8723589}, speech \cite{9450827}, etc. These works  verified that semantic transmission based communications are promising to realize a significant improvement in the transmission reliability and efficiency of semantic symbols.  Most of the  existing works \cite{9398576,8723589,9450827} studied the feasibility of semantic communications under different scenarios standing at technical levels. However, the problem of how to implement efficient resource management in  semantic communications systems also needs to be investigated so as to explore the potential of semantic communications in practical networks. 
Moreover, when the wireless network is congested,  communication loads can be converted to computation tasks with the help of semantic computation in MEC systems. Thus the optimal tradeoff between communications and computations is important, which significantly improves the QoS of terminal devices. }

\textcolor{black}{Motivated by the above observations, we attempt to investigate a novel online resource management scheme, which flexibly and opportunistically coordinates communication and computation resources for semantic-aware dynamic long-term  MEC systems. In a multiple time slots scenario,  information background knowledge that has been previously transmitted can be utilized to help implement semantic compression transmission.  
The main contributions of this paper are summarized as follows:
\begin{itemize}
	\item We investigate a novel dynamic  semantic-aware TDD MEC system. In the considered system, stochastic tasks arrive at TDs  which need to be computed either locally or remotely at the server. With the help of semantic extraction, the amount of data uploading to the MEC server is significantly reduced, and thus higher energy efficiency is achieved. Moreover,   communication loads can be converted to computation tasks with the help of semantic computation in MEC systems. Thus the optimal tradeoff between communications and computations is achieved, which significantly improves the quality-of-service (QoS) of terminal devices. 
	\item To coordinate the operations on TDs and MEC server for best long-term energy efficiency, the problem
	of joint communication and computation resource management is studied. In particular, this problem is
	constructed as an optimization framework aiming to acquire the optimal local computing rate and uploading power,
	remote computing capacity and downloading power allocation, semantic extraction factors, as well as time divisions for uplinks and downlinks with satisfying the average delay
	and processing rate requirements. The proposed optimization framework applies to general semantic tasks in the viewpoint of joint communication and computation design.
	\item A dynamic resource management algorithm is proposed to solve this formulated problem. Particularly, we first utilize Lyapunov optimization technique to transform original multiple time slots stochastic optimization problem into a series of deterministic problems. Then, the current time slot deterministic problem is solved by block coordinate descent (BCD) method and successive convex approximation (SCA) algorithm. We acquire the  closed-form optimal solution for each optimization variable with low complexity. Furthermore, theoretical analysis demonstrates that the optimal semantic extraction factor exhibits a \textit{threshold-based structure} to balance the communication and computation overheads.
\end{itemize}
Simulation results verify the outstanding performance of the proposed algorithm in terms of energy efficiency. A good tradeoff between energy consumption and delay can be obtained by the proposed algorithm.  Compared with the benchmark algorithm without semantic-aware, the proposed algorithm yields up to $41.8\%$ energy reduction. }

The rest of this paper is structured as follows. Section II elaborates the system model and problem formulation. In Section III, we utilize Lyapunov optimization to convert the stochastic optimization problem into a series of deterministic problems.  The online optimization of local computing rate and uploading power, remote computing capacity and downloading power allocation, semantic extraction factor, as well as time divisions,  are rendered in Section IV. Simulation results are presented in Section V. Section VI draws the conclusions. 

	\section{System Model and Problem Formulation}
	\subsection{Network Model}
	\begin{figure}[t]
	\centering
	\includegraphics[width=0.5\textwidth]{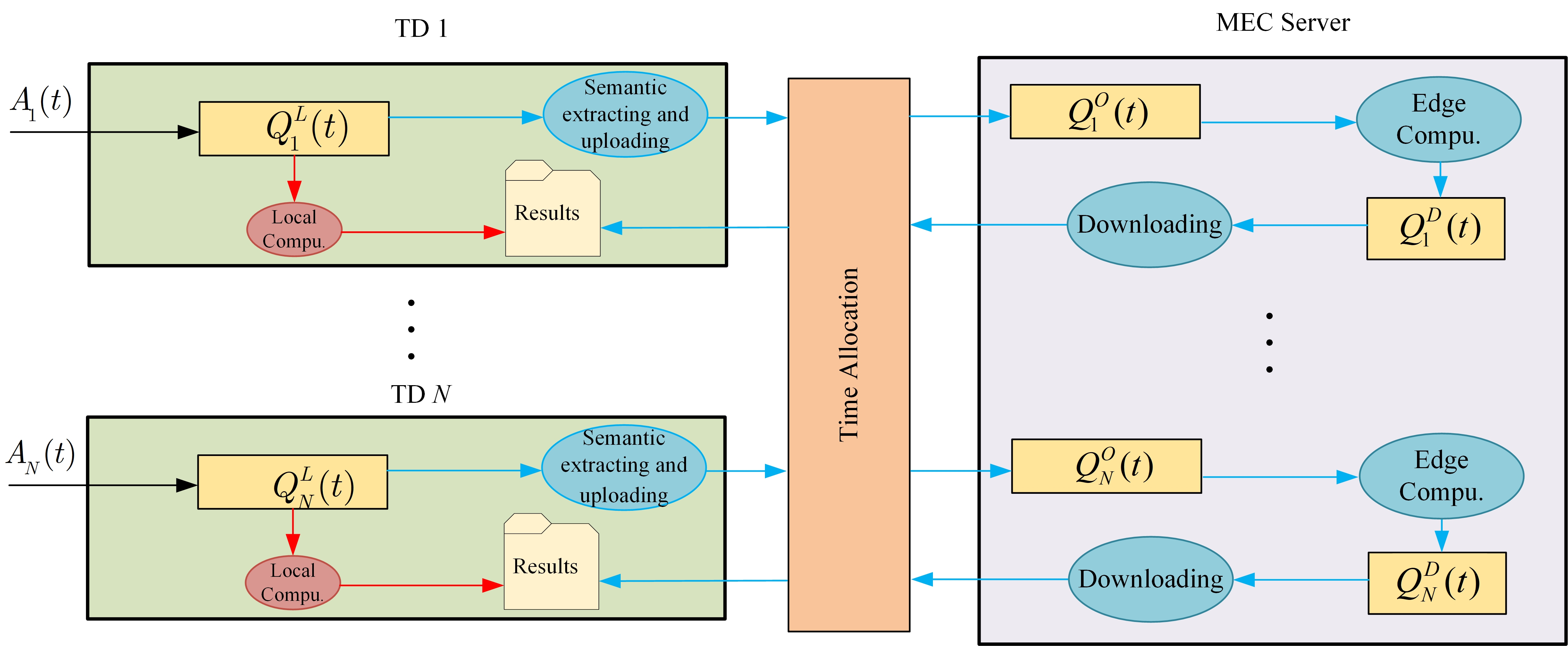}
	\caption{Flow chart of the semantic-aware MEC networks.} \label{flow chart}
	\vspace{-1.5em}
\end{figure}
	Consider a semantic-aware MEC network consisting of a set $\mathcal{N}$ of $N$ TDs with local processing capacity and a MEC server attached to a base station (BS). 
	The flow chart of the network is demonstrated in Fig.~\ref{flow chart}. 	The system contains multiple time slots and the duration of each time slot is $\tau$. The task offloading strategy  and radio resource allocation will be decided and executed at the beginning of each time slot.  Each TD has a task that arrives randomly at the beginning of each slot.  Suppose that TD $n$ has a task arriving at time slot $t$ with the size of $A_n(t)$ bits. 
	Assume the tasks are arbitrarily divisible.  We use $Q_n^L(t)$, $Q_n^{O}(t)$ and $Q_n^{D}(t)$ to respectively represent the local buffer queue,  remote processing queue, and downlink transmission queue for TD $n$ at time slot $t$ in bits. The buffers work in a First-Input First-Output (FIFO) manner.  For the local buffer in TD $n$,  the arriving task at time slot $t$ will be stored in $Q_n^L(t)$. At the same time, a part of $Q_n^L(t)$ is sent to the local computing unit and a part is offloaded to the remote processing  buffer $Q_n^O(t)$ in MEC server. The remaining tasks that have not been processed in $Q_n^L(t)$ will be executed in the several following time slots. For the remote buffers, the results of tasks in $Q_n^O(t)$ will be transferred to $Q_n^D(t)$ for downlink transmission after execution by the MEC server. Subsequently, the results of tasks are downloaded to TD $n$.  Finally, the downloaded results from $Q_n^D(t)$ as well as the results from the local computation queue $Q_n^L(t)$ are combined in local TD $n$. Without loss of generality, we set $Q_n^L(0)=Q_n^O(0)=Q_n^D(0)=0, \forall n \in\mathcal{N}$ at the initial slot.  
	
	To further decrease the transmission burden, each TD equipped with a semantic processing unit sends the  extracted semantic information of tasks to the server instead of the raw data. With the help of semantic extraction, the amount of data uploading to the MEC server is reduced, thus releasing the traffic loads.  Meanwhile, additional workloads for semantic extraction are brought to each TD and the computation intensity of tasks gets large for the MEC server in order to process semantic information. In the following subsections, we will elaborate on the detailed implementations. 
	
	\subsection{Local and Transmission Model}
	Denote $I_n$ as the computation intensity of TD $n$ in units of CPU cycles per bit. Then the local task processing rate (bits) can be obtained by
	\begin{align} \label{R_n^L(t)}
		R_n^L(t)=\frac{f_n^L(t)}{I_n},\quad \forall n \in\mathcal{N}, 
	\end{align}
	where $f_n^L(t)$ represents the local CPU frequency (cycles/second) of TD $n$ at time slot $t$. Therefore, the local computing power consumption can be formulated by $P_n^L(t)=\kappa_n f_n^L(t)^3$,  
	where $\kappa_n$ denotes the energy coefficient of TD $n$. 
	
	Orthogonal frequency division multiple access (OFDMA) is utilized for multi-user multiple access.  There are $N$ sub-channels and each TD occupies one sub-channel with bandwidth $B$.  
	Assume the channel remains static in each time slot duration but varies across different time slots. 
	\textcolor{black}{The path-loss model is given by $\bar{h}_n=A\left(\frac{c}{4\pi f_c d_n}\right)^{\ell}$, where $A$ represents antenna gain, $c$ denotes the speed of light,  $f_c$ is the carrier frequency, $\ell$ denotes the path-loss factor, and $d_n$ represents the distance between TD $n$ and the server \cite{9449944}. The instant channel gain between TD $n$ and MEC server at time slot $t$, denoted by $h_{n}(t)$, follows an i.i.d. Rician distribution with line-of-sight (LoS) 
	link gain equal to $\gamma\bar{h}_n$, where $\gamma$ is the LoS link factor. }
	Therefore, the achievable rate for uplink can be given by 
	\begin{align} \label{R_n^U(t)}
		R_{n}^U(t)=B\log_2\left (1+\frac{h_{n}(t)p_{n}^U(t)}{\sigma^2}\right),\quad \forall n \in\mathcal{N}, 
	\end{align}
	where $\sigma^2$ denotes the additive noise power at the MEC server and $p_{n}^U(t)$ represents the transmit power of TD $n$ at time slot $t$. 
	
	\begin{figure}[t]
		\centering
		\includegraphics[width=0.5\textwidth]{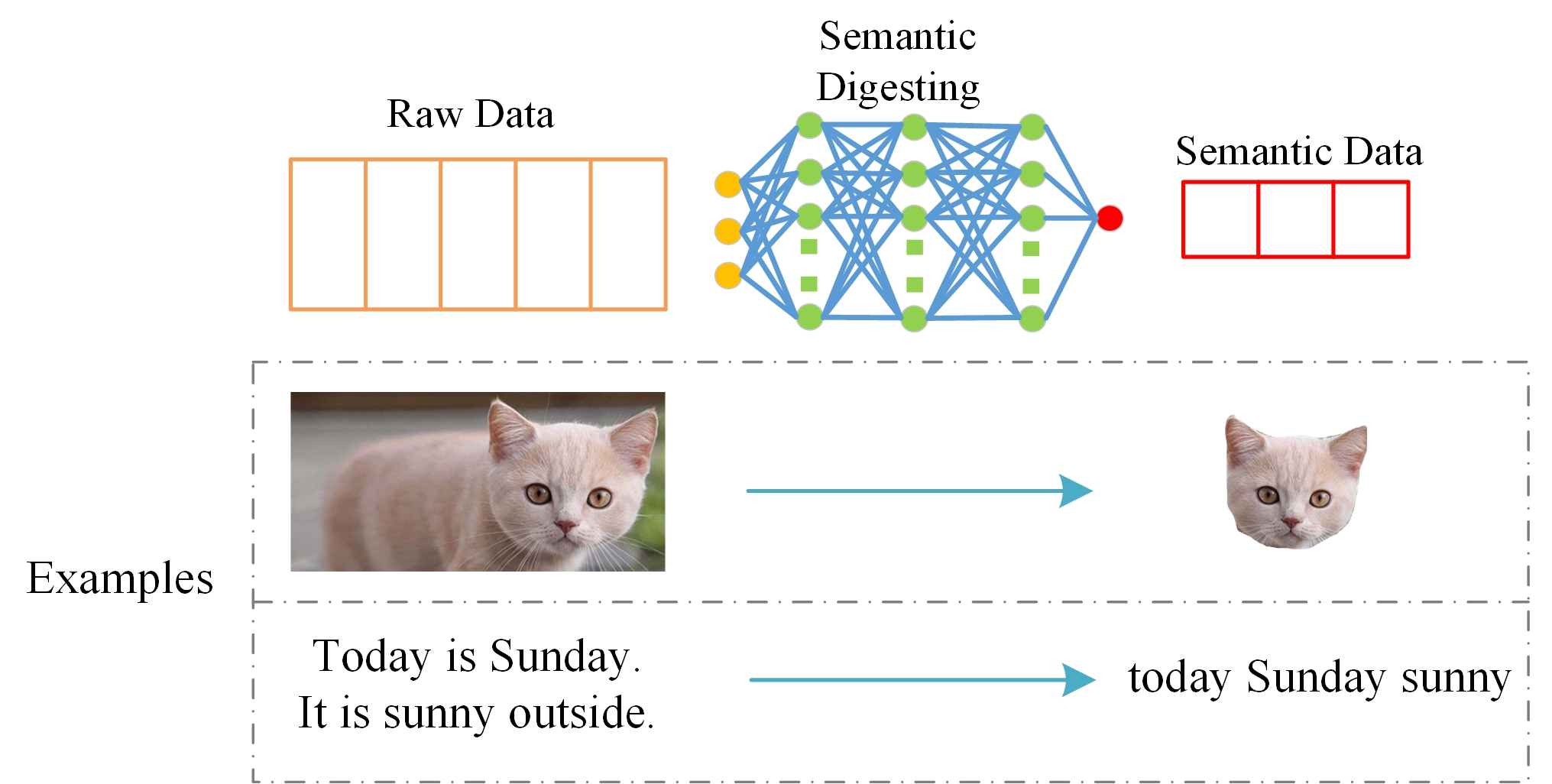}
		\caption{Schematic  diagram of semantic extracting and two examples.} \label{semantic}
		\vspace{-1.5em}
	\end{figure}
	
	 \begin{figure} 
		\centering 
		\subfigure[Accuracy versus extraction factor]{\label{}
			\includegraphics[width=0.5\linewidth]{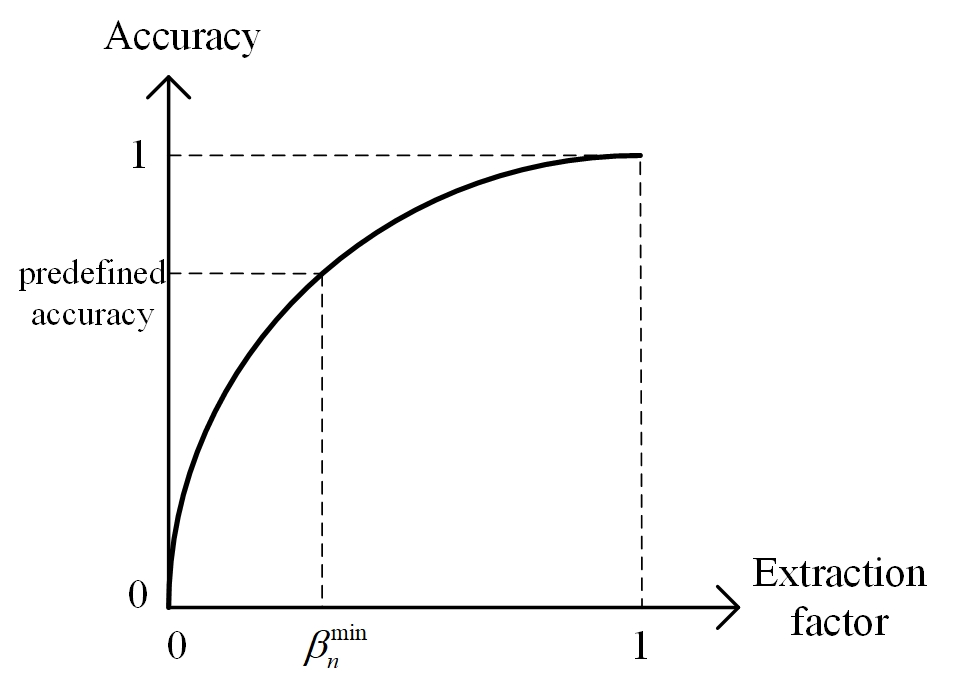}}
		\subfigure[Additional computation overhead $G_n$ versus extraction factor]{\label{}
			\includegraphics[width=0.4\linewidth]{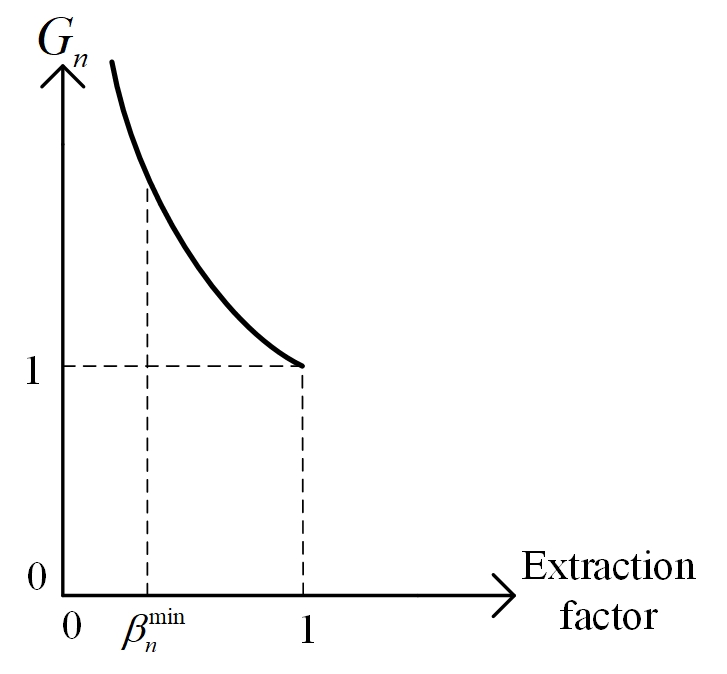}}
		\caption{Performance demonstration of semantic aware model.}
		\label{Semantic fig}
	\end{figure}
	
		\begin{figure}[t]
		\centering
		\includegraphics[width=0.5\textwidth]{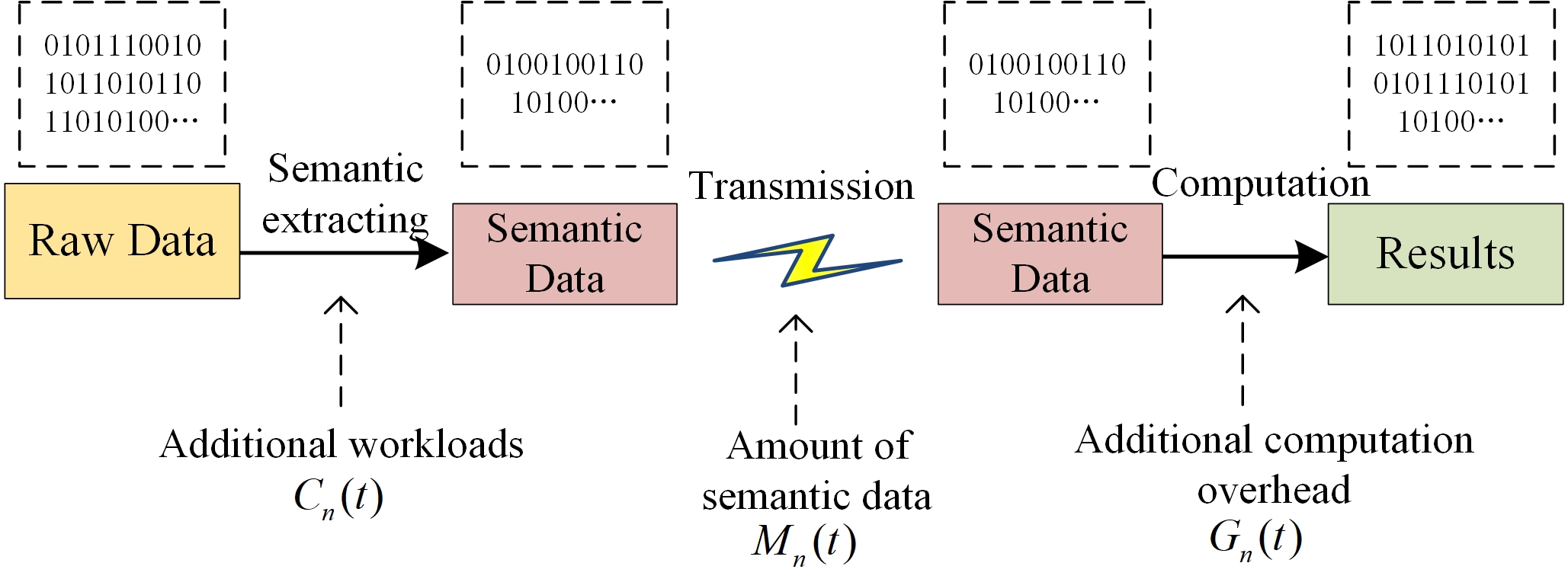}
		\caption{Schematic  diagram of semantic transmissions.} \label{semantic2}
		\vspace{-1.5em}
	\end{figure}
	
	\subsection{Semantic-Aware Model}
	\textcolor{black}{In order to further reduce offloading traffic, compute-then-transmit protocol based semantic-aware transmission is adopted, as shown in Fig.~\ref{semantic}. In specific, TD extracts and transmits the semantic information to the BS instead of raw data. For example, we can utilize the picture cutout method for image transmission, where we use a pre-trained deep neural network to cut out the part of important character while discarding the irrelevant background\cite{9274895}. For text transmission, we can discard meaningless words and use abbreviations to replace the raw text without losing valid information by constructing knowledge graphs\cite{9416312}. In the considered scenario, only simple background knowledge is required to implement semantic extraction.  Hence, the pre-trained model is lightweight and universal.   Compared with task processing,  computation resource required for semantic extraction is much less \cite{10032275,cang2023resource}.   }
	
	\textcolor{black}{Different from traditional MEC networks, we attempt to implement semantic-aware offloading scheme by allowing for a certain degree of inaccuracy. Denote $\beta_n(t)\in[\beta_n^{\min},1]$ as the extracting factor of the raw data to be uploaded by TD $n$ at time slot $t$, where $\beta_n^{\min}$ is the minimum extracting factor to maintain the integrity of TD $n$. As shown in Fig.~\ref{Semantic fig}(a), the accuracy of results by processing semantic information decreases as the extraction factor becomes small. In order to guarantee the predefined accuracy, $\beta_n(t)$ should not be smaller than $\beta_n^{\min}$. Therefore, the rate of semantics to be uploaded at time slot $t$ for TD $n$ is equivalent to $R_n^U(t)$ in equation \eqref{R_n^U(t)}. The output rate of $Q_n^L(t)$ regarding task offloading can be expressed as $R_n^U(t)/\beta_n(t)$. 	
	The amount of semantic data to be uploaded at time slot $t$ is given by
	\begin{align}
		M_n(t)=\tau_n^U(t)R_n^U(t),\quad\forall n,
	\end{align}
	where $\tau_n^U(t)$ denotes the time duration for uplink transmission of TD $n$ at time slot $t$. }
	
	\textcolor{black}{Fig.~\ref{semantic2} depicts 
	the schematic diagram of semantic transmissions. At the transmitter, semantic extraction brings additional workloads. Meanwhile, at the receiver, additional computation overhead is required to process the semantic data rather than raw data. This indicates that  communication loads are converted to computation amounts by semantic transmission. 
	Denote $C_n\left(t\right)$ as the additional workloads for semantic extraction. 
	Since small extraction factor means high compression rate and also introduces more computation load for semantic information reconstruction, thus the workload $C_n(t)$ decreases with extraction factor $\beta_n(t)$ and increases with the amount of semantic information bits $\tau_n^U(t) R_n^U (t)$. As a result, the expression of $C_n(t)$ can be expressed by 
	\begin{align} \label{5}
		C_n(t)=\frac{a\tau_n^U(t)R_n^U(t)}{\beta_n(t)^k},\quad\forall n,
	\end{align}
 	where $a>0$ and $k>1$ are constant parameters. The values of $a$ and $k$ can be obtained in simulations for specific tasks.  }

	\textcolor{black}{Since only extracted semantic information of raw data are received by the MEC server, the computation intensity of the extracted semantic increases accordingly. As shown in Fig.~\ref{Semantic fig}(b), $G_n\geq 1$, which denotes the ratio of computation intensity of semantic data to that of raw data about TD $n$, increases monotonously with extraction factor $\beta_n(t)$ getting small. The increase is caused by computations for processing semantic data and compensations for enhancing accuracy.  Besides, we let $G_n(t)=1$ in case of raw data are processed, i.e., $\beta_n(t)=1$. Without loss of generality, we assume that \footnote{This formulation can represent a kind of convex function.}
	\begin{align} \label{6}
		G_n(t)=\frac{1}{\beta_n(t)^p}, \quad\forall n,
	\end{align}
	where $p>0$ is a constant relevant to tasks. Note that $C_n(t)$, $G_n(t)$ as well as accuracy can be obtained by fitting 
	the corresponding data points of abundant prior experiments. We will show that the proposed framework is well applicable to the generality of $C_n(t)$ and $G_n(t)$ in the simulation section. }


	\vspace{-1.em}
\subsection{MEC Model}
	Denote $F_{MEC}$ as the maximum computation capacity of MEC server, then we have
	$\sum_{n=1}^N f_n^{O}(t)\leq F_{MEC}$, 
	where $f_n^O(t)$ denotes the computation resource of MEC server allocated to TD $n$ at time slot $t$. 
	Therefore, the remote computing power consumption can be formulated by
	$P_n^M(t)=\kappa_M f_n^O(t)^3$, 
	where $\kappa_M$ denotes the energy coefficient of the MEC server. 
	
	 Due to the fact that the semantic data stored in $Q_n^O(t)$ may be extracted from different factors, the semantic data to be processed at the MEC server at time slot $t$ has different extraction factors. Denote $\pmb{\chi}_n(t)$ as the aggregation of all the extraction factors of TD $n$'s data to be processed at the MEC server at current time slot $t$. For the sake of implementation, the extraction factors of the data to be processed at the MEC server at time slot $t$ share a common extraction factor when conducting remote computing, which is the minimum factor of all the data, i.e., $\min\{\pmb{\chi}_n(t)\}$. Here is an example, assume the extraction factors of  semantic data to be processed in MEC server for TD $n$ at time slot $t$ includes $\beta_n(t-3)=0.6$, $\beta_n(t-2)=0.5$, $\beta_n(t-1)=0.8$. Then the common extraction factor for TD $n$ at time slot $t$ is $\min\{0.6,0.5,0.8\}=0.5$.  
	Thus the remote semantic processing rate can be given by
	\begin{align} \label{R_n^M(t)}
	  R_n^M(t)=\frac{f_n^O(t)}{G_n(t)I_n},\quad \forall n\in\mathcal{N},
	\end{align}
	where $G_n(t)$, which is the ratio of computation intensity of semantic data to that of raw data about TD $n$ as shown in \eqref{6}, is modified as $G_n(t)=\frac{1}{\left(\min\{\pmb{\chi}_n(t)\}\right)^p}$. 
	
	Furthermore, denote $H_n(t)$ as the ratio of the amount of results to that of processed semantic data, which decreases with $\min\{\pmb{\chi}_n(t)\}$. Assume $H_n(t)=\frac{U}{\min\{\pmb{\chi}_n(t)\}}, (U>0)$ without loss of generality, where $U$ denotes the ratio of the amount of results to that of raw data.  For example, we have $H_n(t)=U$ when raw data are processed, i.e.,  $\min\{\pmb{\chi}_n(t)\}=1$.  Therefore, the input rate of $Q_n^D(t)$ can be represented by $H_n(t)R_n^M(t)$.

	Similarly, denote $p_n^D(t)$ as the downlink transmit power of BS for TD $n$ at time slot $t$. Hence, the achievable downlink rate can be given by
	\begin{align} \label{R_n^D(t)}
		R_n^D(t)=B\log_2\left(1+\frac{h_n(t)p_n^D(t)}{\sigma^2}\right),\quad \forall n \in\mathcal{N}.
	\end{align} 

		Therefore, the state transition of local buffer $Q_n^L(t)$ can be given by
	\begin{align}  \label{Q_n^L(t+1)}
		&Q_n^L(t+1)=A_n(t)+\nonumber\\
		&\max\bigg\{0, Q_n^L(t)+\underbrace{C_n\left(t\right)}_{\text{(a)}}-\underbrace{\tau R_n^L(t)}_{\text{(b)}}-\underbrace{\frac{\tau_n^U(t)R_{n}^U(t)}{\beta_n(t)}}_{\text{(c)}}\bigg\},
	\end{align} 
	where term (a) represents the additional workloads for semantic extraction as given in \eqref{5}\footnote{Since the average delay is proportional to the average queue length in a queue system\cite{8041893}, $C_n(t)$ can also describe the delay caused by semantic extraction.}, term (b) contains the tasks amounts of local computing and semantic extracting and term (c) is the amount of extracted raw data. 
	Similarly, the state transition of the remote processing queue $Q_n^O(t)$ and downlink transmission queue $Q_n^D(t)$ can be respectively given by
	\begin{align}
		Q_n^O(t+1)&=\max\left\{0, Q_n^O(t)-\tau R_n^M(t)\right\}\nonumber\\
		&+\min \left\{\tau_n^U(t)R_{n}^U(t),Q_n^L(t)-\tau R_n^L(t)\right\},\label{Q_n^O(t+1)}\\
		Q_n^D(t+1)&=\max\left\{0,Q_n^D(t)-\tau_n^D(t)R_n^D(t)\right\}\nonumber\\
		&+H_n(t)\min\left\{\tau R_n^M(t),Q_n^O(t)\right\}.\label{Q_n^D(t+1)}
	\end{align}
	
	\subsection{Problem Formulation}
	According to the Little's law\cite{8041893}, the average delay  is proportional to the average queue length. In order to satisfy the end-to-end delay requirements, we have
	\begin{align}
		\lim_{T\rightarrow\infty}\frac{1}{T}\sum_{t=1}^T\mathbb{E}[Q_n^{total}(t)]\leq Q_n^{avg}, \quad \forall n\in\mathcal{N},
	\end{align}
	where $Q_n^{total}(t)=Q_n^L(t)+Q_n^O(t)+Q_n^D(t)$,  $Q_n^{avg}$ denotes the predefined average queue length of TD $n$, and $\mathbb{E}[\cdot]$ is the expectation operation, which is taken with respect to the tasks arrivals and channel variations. 
	Moreover, in order to guarantee the processing rate of each TD, we have 
	\begin{align}
		\lim_{T\rightarrow\infty}\!\frac{1}{T}\!\sum_{t=1}^{T}\!\mathbb{E}\!\left[\tau R_n^L(t)+\tau_n^U(t)R_{n}^U(t)/\beta_n(t)\right]\geq \tau R_n^{avg},
	\end{align}
	where $R_n^{avg}$ is the predefined average processing rate of TD $n$.
	
	In this paper, we aim at minimizing the long term energy consumption of all TDs as well as the BS. The energy consumption corresponding to TD $n$'s tasks at time slot $t$ contains four parts: local computing energy,  semantic uploading energy, remote computing energy, and results downloading energy, which can be expressed as
    $E_n(t)=\tau P_n^L(t)  + \tau_n^U(t)p_n^U(t) + \tau P_n^M(t) + \tau_n^D(t)p_n^D(t)$.


	
	Let $\pmb{\Phi}(t)=\{\pmb{f}^L(t),\pmb{p}^U(t),\pmb{f}^O(t),\pmb{p}^D(t), \pmb{\beta}(t),\pmb{\tau}^U(t),\pmb{\tau}^D(t)\}$, where $\pmb{f}^L(t)=[f_1^L(t),\cdots,f_N^L(t)]^T$ and $\pmb{f}^O(t)=[f_1^O(t),\cdots,f_N^O(t)]^T$ respectively denote local and remote computation capacity vector at time slot $t$; $\pmb{p}^U(t)=[p_1^U(t),\cdots,p_N^U(t)]^T$ and $\pmb{p}^D(t)=[p_1^D(t),\cdots,p_N^D(t)]^T$ are respectively the uplink and downlink transmit power vector at time slot $t$;   $\pmb{\beta}(t)=[\beta_1(t),\cdots,\beta_N(t)]^T$ denotes extraction factor vector of all TDs at time slot $t$; $\pmb{\tau}^U(t)$ and $\pmb{\tau}^D(t)$ respectively  denote the uplink and downlink time slot division vector. 
	Mathematically, the optimization problem can be posed as 	
	\begin{subequations} \label{P1}
		\begin{align}
			\min_{\pmb{\Phi}(t)}\  &\lim_{T\rightarrow\infty}\frac{1}{T}\sum_{t=1}^T\sum_{n=1}^N\mathbb{E}[E_n(t)],\\
			\textrm{\textrm{s.t.}}\   &\lim_{T\rightarrow\infty}\frac{1}{T}\sum_{t=1}^T\mathbb{E}[Q_n^{total}(t)]\leq Q_n^{avg}, \quad \forall n\in\mathcal{N},\\
			&\lim_{T\rightarrow\infty}\!\frac{1}{T}\!\sum_{t=1}^{T}\!\mathbb{E}\!\left[\tau R_n^L(t)\!+\!\frac{\tau_n^U(t)R_{n}^U(t)}{\beta_n(t)}\right]\!\geq\! \tau R_n^{avg}, 
			\forall n\in\mathcal{N},\\
			&0\leq f_n^L(t)\leq f_n^{max}, \quad \forall n\in\mathcal{N},t,\\
			&\sum_{n=1}^N f_n^O(t)\leq F_{MEC},\quad \forall t,\\
			&0\leq p_{n}^U(t)\leq p_{n}^{max}, \quad \forall n\in\mathcal{N},t,\\
			&\sum_{n=1}^{N}p_n^D(t)\leq P_{MEC},\quad\forall t,\\  
			&\beta_n^{\min}\leq\beta_n(t)\leq 1, \quad \forall n\in\mathcal{N}, t,\\
			&\tau_n^U(t)+\tau_n^D(t)\leq \tau,\quad\forall n\in\mathcal{N}, t,\\
			&\tau_n^U(t),\tau_n^D(t)\geq 0,\quad\forall n\in\mathcal{N}, t,
		\end{align}
	\end{subequations}
	where $f_n^{max}$ and $p_{n}^{max}$ respectively denote the maximum local computation capacity and maximum transmit power of TD $n$ and $P_{MEC}$   represents the maximum transmit power of the MEC server.
	\textcolor{black}{Different from previous works \cite{7842160,9242286,9036074,9040668,8638800}, semantic extraction factor $\beta_n(t)$ is considered in problem \eqref{P1}, which involves nonconvex objective function and constraints. Moreover, additional workloads $C_n(t)$ and computation overhead $G_n(t)$ are related to tasks, which applies to general scenarios.  Besides, joint uplink and downlink transmission optimization is investigated for a dynamic multiple time slots scenario. }
	
	\textcolor{black}{Obviously, problem \eqref{P1} is a long-term stochastic optimization problem where we have to determine computing resources allocation, transmit power, and time slot division for each entity in the system at each time slot.  
	It is difficult to solve this problem due to its inherent stochastic tasks arrivals and channel variations which are priorly unknown.  Moreover, the temporally correlated decision variables make this problem more intractable. Additionally, the resource allocation and time slot division are coupled even in a single time slot, which poses another challenge for problem \eqref{P1}. Therefore, we propose an online stochastic optimization algorithm to solve this problem hinging on Lyapunov optimization method, which will be elaborated on in the next section. }

	\section{Stochastic Optimization Algorithm}
	In this section, we  utilize Lyapunov optimization technique to transform problem \eqref{P1} into a series of deterministic subproblems at each time slot. 
	First, the average expectation constraints (\ref{P1}b) and (\ref{P1}c) involve long-term calculation, which makes it hard to directly handle problem \eqref{P1}. To make problem \eqref{P1} tractable, the concept of virtual queue is introduced. 
	In particular, denote the  virtual queues of TD $n$ corresponding to constraints (\ref{P1}b) and (\ref{P1}c) by $X_n^q(t)$ and $X_n^r(t)$, respectively, with $X_n^q(0)=0$ and $X_n^r(0)=0$. Thus, the state transitions of virtual queues $X_n^q(t)$ and $X_n^r(t)$ are given as follows:
	\begin{align}
		&X_n^q(t+1)=\max\left\{0,X_n^q(t)+Q_n^{total}(t+1)-Q_n^{avg}\right\},\label{X_n^q(t+1)}\\
		&\hspace{-0em}X_n^r(t+1)=\nonumber\\
		&\max\left\{0,X_n^r(t)-\tau R_n^L(t)-\frac{\tau_n^U(t)R_{n}^U(t)}{\beta_n(t)}+\tau R_n^{avg}\right\},\label{X_n^r(t+1)}
	\end{align} 
where \eqref{X_n^q(t+1)} and \eqref{X_n^r(t+1)} are derived from (\ref{P1}b) and (\ref{P1}c), respectively. Intuitively, in order to satisfy constraints (\ref{P1}b) and (\ref{P1}c), the length of virtual queues $X_n^q(t)$ and $X_n^r(t)$ should be as short as possible.  For rigorous analysis, we have the following lemma:
\begin{lemma} \label{lemma1}
		\emph{Constraints (\ref{P1}b) and (\ref{P1}c) are guaranteed if all  virtual queues are mean rate stable, i.e., 
		\begin{align} \label{stable}
			&\lim_{t\rightarrow\infty}\frac{\mathbb{E}[|X_n^q(t)|]}{t}=0,\quad \lim_{t\rightarrow\infty}\frac{\mathbb{E}[|X_n^r(t)|]}{t}=0, \forall n \in\mathcal{N}.
		\end{align} }
	\end{lemma}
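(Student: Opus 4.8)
The plan is to invoke the standard connection between mean rate stability of a virtual queue and the long-term average constraint it encodes, applied separately to $X_n^q(t)$ and $X_n^r(t)$. First I would establish the basic one-step inequality for $X_n^q(t)$: from the update rule \eqref{X_n^q(t+1)} and the elementary fact $\max\{0,x\}\geq x$, we get
\begin{align}
X_n^q(t+1)\geq X_n^q(t)+Q_n^{total}(t+1)-Q_n^{avg}.
\end{align}
Summing this telescoping inequality over $t=0,1,\dots,T-1$ and using $X_n^q(0)=0$ yields
\begin{align}
X_n^q(T)\geq \sum_{t=0}^{T-1}\left(Q_n^{total}(t+1)-Q_n^{avg}\right),
\end{align}
so that after dividing by $T$, rearranging, and taking expectations,
\begin{align}
\frac{1}{T}\sum_{t=1}^{T}\mathbb{E}[Q_n^{total}(t)]\leq Q_n^{avg}+\frac{\mathbb{E}[X_n^q(T)]}{T}.
\end{align}

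Next I would take $T\to\infty$ on both sides. Since $\mathbb{E}[X_n^q(T)]\leq \mathbb{E}[|X_n^q(T)|]$, the mean rate stability hypothesis \eqref{stable} forces $\mathbb{E}[X_n^q(T)]/T\to 0$, and therefore
\begin{align}
\lim_{T\rightarrow\infty}\frac{1}{T}\sum_{t=1}^{T}\mathbb{E}[Q_n^{total}(t)]\leq Q_n^{avg},
\end{align}
which is exactly constraint (\ref{P1}b). The argument for $X_n^r(t)$ is entirely symmetric: from \eqref{X_n^r(t+1)} we have $X_n^r(t+1)\geq X_n^r(t)-\tau R_n^L(t)-\tau_n^U(t)R_n^U(t)/\beta_n(t)+\tau R_n^{avg}$; telescoping, using $X_n^r(0)=0$, dividing by $T$, taking expectations, and letting $T\to\infty$ with \eqref{stable} gives
\begin{align}
\lim_{T\rightarrow\infty}\frac{1}{T}\sum_{t=1}^{T}\mathbb{E}\!\left[\tau R_n^L(t)+\frac{\tau_n^U(t)R_n^U(t)}{\beta_n(t)}\right]\geq \tau R_n^{avg},
\end{align}
which is constraint (\ref{P1}c).

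There is no serious obstacle here — this is a textbook Lyapunov-stability argument and the only thing to be careful about is the bookkeeping of indices (whether the sum runs to $T$ or $T-1$, and the one-slot shift between $Q_n^{total}(t+1)$ appearing in the virtual-queue update and $Q_n^{total}(t)$ appearing in the constraint), which washes out in the limit. If one wanted full rigor one would also note that the liminf/limsup versions coincide under the stated hypotheses, but for the purposes of this paper the telescoping-plus-divide-by-$T$ computation above suffices. I would present it compactly as the two symmetric cases above.
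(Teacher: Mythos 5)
Your proposal is correct and is exactly the standard telescoping argument from Neely's drift framework, which is what the paper itself invokes (its ``proof'' simply cites the similar method in \cite{Neely2010Stochastic} and omits the details). You have filled in precisely those details, including the right handling of the sign of $\mathbb{E}[X_n^q(T)]$, the one-slot index shift, and the limsup caveat, so there is nothing to correct.
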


	\begin{proof}
	Since Lemma~\ref{lemma1} can be proved by using a similar method in \cite{Neely2010Stochastic}, the proof is omitted here. 
\end{proof}
	
	Lemma~\ref{lemma1} implies that problem \eqref{P1} with time-average expectation  constraints is equivalent to the following problem with queue stability constraints:
	\begin{subequations} \label{P2}
	\begin{align}
		\max_{\pmb{\Phi}(t)}\quad &\lim_{T\rightarrow\infty}\frac{1}{T}\sum_{t=1}^T\sum_{n=1}^N\mathbb{E}[E_n(t)],\\
		\textrm{\textrm{s.t.}} \quad&  \text{(\ref{P1}d)\ -\ (\ref{P1}j)},\ \eqref{stable}. 
	\end{align}
\end{subequations} 
To this end, we employ Lyapunov optimization to solve problem \eqref{P2}. 
First, the Lyapunov function that measures the backlog of virtual queues  can be posed as
$L(\pmb{\Theta}(t))=\frac{1}{2}\sum_{n=1}^N \left[X_n^q(t)^2+X_n^r(t)^2\right]$,
where $\pmb{\Theta}(t)=[X_n^q(t),X_n^r(t)|\forall n \in\mathcal{N}]$ contains all the states of virtual queues. Thus the one-step conditional Lyapunov drift function at time slot $t$ can be given by $\Delta(\pmb{\Theta}(t))=\mathbb{E}\left[L(\pmb{\Theta}(t+1))-L(\pmb{\Theta}(t))|\pmb{\Theta}(t)\right]$. 
Obviously, minimizing $\Delta(\pmb{\Theta}(t))$ leads to the most stable virtual queues while resulting in non-ideal system energy consumption. In order to achieve a tradeoff between the queue stability and long term average energy consumption, we try to minimize the following drift-plus-penalty function at every time slot as an alternative:
\begin{align} \label{Delta_V}
	\Delta_V(\pmb{\Theta}(t))=\Delta(\pmb{\Theta}(t))+V\mathbb{E}\left[\sum_{n=1}^NE_n(t)\Big|\pmb{\Theta}(t)\right],
\end{align}
 where $V$ is a non-negative control parameter balancing queues length and system energy consumption.
  
 \begin{theorem} \label{theorem_1}
 	\emph{The upper bound of $\Delta_V(\pmb{\Theta}(t))$ is given as follows for all $V$, $t$,  and $\pmb{\Theta}(t)$:
 	\begin{small}
 	\begin{align} \label{theorem_1_eq}
 		&\Delta_V(\pmb{\Theta}(t))\leq \Xi+\Pi(t)+\sum_{n=1}^N\mathbb{E}\bigg[2Q_n^L(t)C_n(t)\nonumber\\
 		&-2\left(Q_n^L(t)+C_n^{max}\right)\left(\tau R_n^L(t)+\frac{\tau_n^U(t)R_n^U(t)}{\beta_n(t)}\right)\nonumber\\
 		&+4Q_n^O(t)\left(\tau_n^U(t)R_n^U(t)-\tau R_n^M(t)\right)\nonumber\\
 		&+4Q_n^D(t)\left(H_n(t)\tau R_n^M(t)-\tau_n^D(t)R_n^D(t)\right)\nonumber\\
 		&+X_n^q(t)\bigg(\max\bigg\{0, Q_n^L(t)+C_n\left(t\right) -\tau R_n^L(t)-\frac{\tau_n^U(t)R_{n}^U(t)}{\beta_n(t)}\bigg\}\nonumber\\
 		&+\max\left\{0, Q_n^O(t)\!-\!\tau R_n^M(t)\right\}\!+\! \min \left\{\tau_n^U(t)R_{n}^U(t),Q_n^L(t)\!-\!\tau R_n^L(t)\right\}\nonumber\\
 		&+\max\left\{0,Q_n^D(t)\!-\!\tau_n^D(t)R_n^D(t)\right\}\!+\! H_n(t)\min\left\{\tau R_n^M(t),Q_n^O(t)\right\}\bigg)\nonumber\\
 		&-X_n^r(t)\left(\tau R_n^L(t)+\frac{\tau_n^U(t)R_{n}^U(t)}{\beta_n(t)}\right)+VE_n(t)\bigg|\pmb{\Theta}(t)\bigg],
 	\end{align}
\end{small}where $\Xi$ is a constant of all time slots and $\Pi(t)$ is a constant at time slot $t$ which involves the current system states:
\begin{small}
 \begin{align}
 	\Xi=&\sum_{n=1}^N\bigg[(C_n^{max})^2+\tau^2\left(R_n^{L,max}+\frac{R_n^{U,max}}{\beta_n^{\min}}\right)^2+2\tau^2(R_n^{D,max})^2\nonumber\\
 	&+\frac{1}{2}(Q_n^{avg})^2+\frac{1}{2}\left(\tau R_n^{L,\max}+\frac{\tau R_{n}^{U,\max}}{\beta_n^{\min}}\right)^2\nonumber\\
 	&+2\tau^2(R_n^{M,max})^2+2\tau^2(R_n^{U,max})^2+\frac{1}{2}\left(\tau R_n^{avg}\right)^2\bigg].
 \end{align}\end{small}
\begin{small}
 \begin{align}
 	\Pi(t)=&\sum_{n=1}^N\bigg[Q_n^L(t)^2+A_n(t)^2+2\left(Q_n^L(t)+C_n^{max}\right)A_n(t)\nonumber\\
 	&+2Q_n^O(t)^2+2Q_n^D(t)^2+2(H_n(t)\tau R_n^{M,max})^2\nonumber\\
 	&+X_n^q(t)(A_n(t)-Q_n^{avg})+X_n^r(t)\tau R_n^{avg}\bigg].
 \end{align}\end{small}}
 \end{theorem}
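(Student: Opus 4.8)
\emph{Proof strategy.} The plan is to run the standard quadratic drift estimate of Lyapunov optimization on the two families of virtual queues $\{X_n^q(t)\}$ and $\{X_n^r(t)\}$, with the physical-queue recursions \eqref{Q_n^L(t+1)}--\eqref{Q_n^D(t+1)} substituted wherever the total backlog $Q_n^{total}(t+1)$ occurs. First I would square the updates \eqref{X_n^q(t+1)} and \eqref{X_n^r(t+1)} and use $(\max\{0,x\})^2\le x^2$ to obtain, for each $n$, the familiar one-step inequality $\tfrac12\big(X(t{+}1)^2-X(t)^2\big)\le\tfrac12 d(t)^2+X(t)d(t)$, where $d(t)=Q_n^{total}(t+1)-Q_n^{avg}$ for $X_n^q$ and $d(t)=\tau R_n^{avg}-\tau R_n^L(t)-\tau_n^U(t)R_n^U(t)/\beta_n(t)$ for $X_n^r$. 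Summing over $n$, taking $\mathbb{E}[\,\cdot\mid\pmb{\Theta}(t)]$, and adding the penalty $V\,\mathbb{E}[\sum_nE_n(t)\mid\pmb{\Theta}(t)]$ of \eqref{Delta_V} reduces the proof to bounding the four quantities $X_n^q(t)d_n^q(t)$, $X_n^r(t)d_n^r(t)$, $\tfrac12 d_n^q(t)^2$, $\tfrac12 d_n^r(t)^2$ and sorting the outcome into the slot-independent constant $\Xi$, the quantity $\Pi(t)$ depending only on the observed state $\{Q_n^L(t),Q_n^O(t),Q_n^D(t),X_n^q(t),X_n^r(t),A_n(t),H_n(t)\}$, and the residual terms that still involve the decisions $\pmb{\Phi}(t)$, i.e., the right-hand side of \eqref{theorem_1_eq}.

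For the linear terms I would substitute $Q_n^{total}(t+1)=A_n(t)+[\text{the three right-hand sides of }\eqref{Q_n^L(t+1)}\text{--}\eqref{Q_n^D(t+1)}]$ into $X_n^q(t)d_n^q(t)$: the part $X_n^q(t)(A_n(t)-Q_n^{avg})$ is moved to $\Pi(t)$ and the part multiplying the five $\max/\min$ expressions is kept verbatim, producing the long $X_n^q(t)(\cdots)$ block of \eqref{theorem_1_eq}; from $X_n^r(t)d_n^r(t)$ the piece $X_n^r(t)\tau R_n^{avg}$ goes to $\Pi(t)$ while $-X_n^r(t)\big(\tau R_n^L(t)+\tau_n^U(t)R_n^U(t)/\beta_n(t)\big)$ stays. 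The quadratic term $\tfrac12 d_n^r(t)^2$ is disposed of by $(\alpha-\gamma)^2\le\alpha^2+\gamma^2$ for $\alpha,\gamma\ge0$, followed by bounding each rate by its per-slot maximum (finite because $f_n^L\le f_n^{max}$, $p_n^U\le p_n^{max}$, $\beta_n\ge\beta_n^{\min}$, $f_n^O\le F_{MEC}$, $p_n^D\le P_{MEC}$, so that $R_n^{L,\max}$, $R_n^{U,\max}$, $R_n^{M,\max}$, $R_n^{D,\max}$ and $C_n^{\max}$ are well defined), which yields $\tfrac12(\tau R_n^{avg})^2+\tfrac12\big(\tau R_n^{L,\max}+\tau R_n^{U,\max}/\beta_n^{\min}\big)^2$, both absorbed into $\Xi$.

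The main effort is the term $\tfrac12 d_n^q(t)^2=\tfrac12\big(Q_n^{total}(t+1)-Q_n^{avg}\big)^2$. I would drop the nonpositive cross term to get $\tfrac12 Q_n^{total}(t+1)^2+\tfrac12(Q_n^{avg})^2$, apply $(\alpha+\beta+\gamma)^2\le 2\alpha^2+4\beta^2+4\gamma^2$ with $(\alpha,\beta,\gamma)=(Q_n^L(t{+}1),Q_n^O(t{+}1),Q_n^D(t{+}1))$, and then bound each $Q_n^L(t{+}1)^2$, $Q_n^O(t{+}1)^2$, $Q_n^D(t{+}1)^2$ once more by the scalar quadratic inequality applied to its own recursion, using $\max\{0,Q-r\}\le Q$, $\min\{r,Q\}\le r$ and $(\max\{0,Q-r\})^2\le(Q-r)^2$. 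This is exactly what generates the coefficients $2$ and $4$ of \eqref{theorem_1_eq}: the $2Q_n^O(t{+}1)^2$ piece yields $2Q_n^O(t)^2$ (to $\Pi(t)$), the negative-drift pair $4Q_n^O(t)\big(\tau_n^U(t)R_n^U(t)-\tau R_n^M(t)\big)$, and the bounded residual $2\tau^2(R_n^{M,\max})^2+2\tau^2(R_n^{U,\max})^2$; the $2Q_n^D(t{+}1)^2$ piece yields $2Q_n^D(t)^2$ and $2(H_n(t)\tau R_n^{M,\max})^2$ (to $\Pi(t)$), $4Q_n^D(t)\big(H_n(t)\tau R_n^M(t)-\tau_n^D(t)R_n^D(t)\big)$, and $2\tau^2(R_n^{D,\max})^2$; and $Q_n^L(t{+}1)^2$, after expanding around the arrival $A_n(t)$ and using $C_n(t)\le C_n^{\max}$ on the pieces that are quadratic in $C_n$ while keeping $C_n(t)$ in the terms that couple to $\pmb{\Phi}(t)$, yields $Q_n^L(t)^2+A_n(t)^2+2(Q_n^L(t)+C_n^{\max})A_n(t)$ (to $\Pi(t)$), the pair $2Q_n^L(t)C_n(t)-2(Q_n^L(t)+C_n^{\max})\big(\tau R_n^L(t)+\tau_n^U(t)R_n^U(t)/\beta_n(t)\big)$, and the residual $(C_n^{\max})^2+\tau^2\big(R_n^{L,\max}+R_n^{U,\max}/\beta_n^{\min}\big)^2$. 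Adding the leftover $\tfrac12(Q_n^{avg})^2$ and collecting everything gives precisely \eqref{theorem_1_eq} with $\Xi$ and $\Pi(t)$ as stated.

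The step I expect to be the real obstacle is this final sorting, not any individual estimate. Because the three physical queues are coupled inside $X_n^q$ and because $C_n(t)$ and $H_n(t)$ are decision-dependent, for every sub-term one must decide whether to (i) bound it by $C_n^{\max}$ (or by the crude bound $H_n(t)\le U/\beta_n^{\min}$) and send it to $\Xi$, (ii) keep $H_n(t)$ and send it to $\Pi(t)$, or (iii) retain it with $\pmb{\Phi}(t)$, and these choices must be coordinated so that the coefficients collapse to the $2$'s and $4$'s of \eqref{theorem_1_eq}. The two applications of the scalar quadratic inequality, the $\max/\min$ truncations, and passing to $\mathbb{E}[\,\cdot\mid\pmb{\Theta}(t)]$ are then routine.
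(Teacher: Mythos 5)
Your proposal is correct and follows essentially the same route as the paper's Appendix A: squaring the virtual-queue updates with $(\max\{0,x\})^2\le x^2$, dropping the nonpositive cross terms, splitting $Q_n^{total}(t+1)^2$ via $(x+y)^2\le 2x^2+2y^2$ (twice) to produce the coefficients $2$ and $4$, expanding each physical-queue recursion with the same $\max/\min$ scalings, bounding the purely quadratic residuals by the per-slot maxima into $\Xi$, and sorting the arrival- and state-dependent pieces into $\Pi(t)$ while keeping the decision-coupled terms and the verbatim $X_n^q(t)(\cdots)$ block on the right-hand side of \eqref{theorem_1_eq}. The term-by-term bookkeeping you describe matches the paper's inequalities (a)--(f) exactly, so no gap remains.
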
 
  \begin{proof}
  	Please refer to Appendix~\ref{proof_1}. 
  \end{proof}
According to  Theorem~\ref{theorem_1}, the drift-plus-penalty minimization algorithm proceeds with minimizing the right hand side (RHS) of \eqref{theorem_1_eq} opportunistically by observing the states of all the queues at each time slot. 
  However, the RHS of \eqref{theorem_1_eq} is non-differentiable due to the existence of $\max\{\cdot\}$ and  $\min\{\cdot\}$ operations. To this end, a differentiable upper bound of the RHS of \eqref{theorem_1_eq} is found according to the following steps.
  Firstly, the $\max\{0,x\}$ is equivalent to $x$ if $x\geq 0 $ holds. Therefore, the max function in the RHS of \eqref{theorem_1_eq} can be removed if 
  \begin{align} 
  	 \tau R_n^L(t)+\frac{\tau_n^U(t)R_{n}^U(t)}{\beta_n(t)}&\leq Q_n^L(t)+C_n\left(t\right),\quad  \forall n \in\mathcal{N},\label{max_appro}\\
  	 \tau R_n^M(t) &\leq Q_n^O(t),\quad \forall n \in\mathcal{N},\\
  	 \tau_n^D(t) R_n^D(t) &\leq Q_n^D(t),\quad \forall n \in\mathcal{N}, \label{max_appro3}
  \end{align}
  are added in the constraints. This scaling is reasonable on account of the causality of buffers, i.e., the amount of data to be processed at the  current time slot cannot exceed the amount of overstock data in the buffer. For the min function, we have
  \begin{align}
  	&\min \left\{\tau_n^U(t)R_{n}^U(t),Q_n^L(t)\!-\!\tau R_n^L(t)\right\}\!\leq\!\tau_n^U(t)R_{n}^U(t),\forall n \in\mathcal{N},\\
  	&\min\left\{\tau R_n^M(t),Q_n^O(t)\right\}\leq\tau R_n^M(t), \forall n \in\mathcal{N}.\label{min_appro}
  \end{align}
 
  Hence, based on \eqref{max_appro}-\eqref{min_appro}, the following problem is reconstructed with its objective function lower bounded by the RHS of \eqref{theorem_1_eq}:
  \begin{small}
\begin{subequations} \label{P4}
	\begin{align}
		\min_{\pmb{\Phi}(t)}& \ 	\sum_{n\in\mathcal{N}}\bigg[2Q_n^L(t)C_n(t)+4Q_n^O(t)\left(\tau_n^U(t)R_n^U(t)-\tau R_n^M(t)\right)\nonumber\\
		&-2\left(Q_n^L(t)+C_n^{max}\right)\left(\tau R_n^L(t)+\frac{\tau_n^U(t)R_n^U(t)}{\beta_n(t)}\right)\nonumber\\
		&+X_n^q(t)\bigg( C_n\left(t\right) -\tau R_n^L(t)-\frac{\tau_n^U(t)R_{n}^U(t)}{\beta_n(t)}-\tau R_n^M(t)\nonumber\\
		&+\tau_n^U(t)R_{n}^U(t)-\tau_n^D(t)R_n^D(t)+H_n(t)\tau R_n^M(t)\bigg)\nonumber\\
		&+4Q_n^D(t)\left(H_n(t)\tau R_n^M(t)-\tau_n^D(t)R_n^D(t)\right)\nonumber\\
		&-X_n^r(t)\left(\tau R_n^L(t)+\frac{\tau_n^U(t)R_{n}^U(t)}{\beta_n(t)}\right)+VE_n(t)\bigg],\\
		\textrm{\textrm{s.t.}}&\ \   \text{(\ref{P1}d)}-\text{(\ref{P1}j)},\ \eqref{max_appro}-\eqref{max_appro3},
	\end{align}
\end{subequations} \end{small}
where the constant terms are removed. 

Therefore, multiple time slots stochastic optimization problem \eqref{P1} can be solved by obtaining the solution of  problem \eqref{P4} once at every time slot $t$ through observing the current states of queues  $\pmb{\Theta}(t)$, channel gains $h_n(t)$, and  extraction factors of the data to be processed at MEC server $\pmb{\chi}_n(t)$.  In the next section, an efficient algorithm is proposed to solve problem \eqref{P4}. 

  	\section{Online Optimization Algorithm}
  	By observing the structure of problem \eqref{P4}, it can be decoupled into two problems, which means that we can solve  problem \eqref{P4} into two stages. In the first stage, we solve the local process and transmit power problem. In the second stage, we optimize  remote computing capacity allocation problem. 
  	
  	\subsection{Local Process and Transmission Power Problem}
  	According to \eqref{P4}, the local process and transmit power problem is formulated as following:
  	\begin{subequations} \label{P5}
  	 \begin{align}
  	 	&\min_{\overset{\pmb{f}^L(t),\pmb{p}^U(t),\pmb{p}^D(t),}{\pmb{\beta}(t),\pmb{\tau}^U(t),\pmb{\tau}^D(t)}}  \sum_{n=1}^N\bigg[\tilde{W}_n^1(t)C_n(t)+\tilde{W}_n^3(t)\tau_n^U(t)R_n^U(t)\nonumber\\
  	 	&-\tilde{W}_n^2(t)\left(\tau\frac{f_n^L(t)}{I_n}+\frac{\tau_n^U(t) R_n^U(t)}{\beta_n(t)}\right)-\tilde{W}_n^4(t)\tau_n^D(t)R_n^D(t)\nonumber\\
  	 	&+V\left(\tau\kappa_n f_n^L(t)^3+\tau_n^U(t)p_n^U(t)+\tau_n^D(t)p_n^D(t)\right)\bigg]\\
  	 	&\textrm{\textrm{s.t.}}\quad  \eqref{max_appro}, \eqref{max_appro3}, \text{(\ref{P1}d)}, \text{(\ref{P1}f)-}\text{(\ref{P1}j)},
  	 \end{align}
  	\end{subequations}
  	where $\tilde{W}_n^1(t)=2Q_n^L(t)+X_n^q(t)\geq 0$,  $\tilde{W}_n^2(t)=X_n^r(t)+2Q_n^L(t)+2C_n^{\max}+X_n^q(t)>0$, $\tilde{W}_n^3(t)=4Q_n^O(t)+X_n^q(t)\geq 0$, $\tilde{W}_n^4(t)=4Q_n^D(t)+X_n^q(t)\geq 0$. 
  	Problem \eqref{P5} is non-convex. Moreover, due to the coupling relationship between transmit power, extraction factor, and time division, it is quite difficult to solve this problem. To this end, block coordinate descent (BCD) method is adopted to solve problem \eqref{P5}. 
  	  	
  	\subsubsection{Local Computing Rate and Transmission Power Optimization}
  	We first optimize local computing rate and transmit power with fixed extraction factor and time division.  However, we cannot directly deduce whether the coefficient of  
  	$R_n^U(t)$ is positive or not in  (\ref{P5}a). 
  	Therefore, the concavity or convexity of the objective function of \eqref{P5} with respect to $p_n^U(t)$ cannot be judged even though $R_n^U(t)$ is concave with $p_n^U(t)$. 
  	Hence, we optimize $r_n^U(t)$, which is the uplink transmission rate for TD $n$ at time slot $t$, instead of optimizing $p_n^U(t)$ directly.
  	According to \eqref{R_n^U(t)}, we can obtain 
  	$p_n^U(t)=\frac{\sigma^2}{h_n(t)}\left(e^{\frac{r_n^U(t)\ln 2}{B}}-1\right)$, 
which is convex with respect to $r_n^L(t)$. With given extraction factor and time division variable, problem \eqref{P5} can be reformulated as 
  	\begin{subequations} \label{32}
	\begin{align}
		\hspace{-1em}\min_{\overset{\pmb{f}^L(t),\pmb{r}^U(t),}{\pmb{p}^D(t)}} & \sum_{n=1}^N \bigg[ -\frac{\tilde{W}_n^2(t)\tau}{I_n}f_n^L(t)-\tilde{W}_n^4(t)\tau_n^D(t)R_n^D(t)\nonumber\\
		&+\left(\frac{a\tilde{W}_n^1(t)}{\beta_n(t)^k}-\frac{\tilde{W}_n^2(t)}{\beta_n(t)}+\tilde{W}_n^3(t)\right)\tau_n^U(t) r_n^U(t)\nonumber\\
		&+V\left(\tau\kappa_n f_n^L(t)^3+\tau_n^U(t)\frac{\sigma^2}{h_n(t)}\left(e^{\frac{r_n^U(t)\ln 2}{B}}-1\right)\right.\nonumber\\
		&+\tau_n^D(t)p_n^D(t)\bigg)\bigg]\\
		\textrm{\textrm{s.t.}} \quad  & \tau\frac{f_n^L(t)}{I_n}\!+\!\frac{\tau_n^U(t)r_{n}^U(t)}{\beta_n(t)}\!\leq\! Q_n^L(t)\!+\!\!\frac{a\tau_n^U(t)r_n^U(t)}{\beta_n(t)^k}\!, \forall n \!\in\!\!\mathcal{N}\!,\\
		&p_n^D(t)\leq \frac{\sigma^2}{h_n(t)}\left(e^{\frac{Q_n^D(t)\ln 2}{B\tau_n^D(t)}}-1\right), \forall n \in\mathcal{N},\\
		&0\leq f_n^L(t)\leq f_n^{\max},\quad \forall n \in\mathcal{N},\\
		&r_n^U(t)\leq B\log_2 \left(1+\frac{h_n(t)p_n^{\max}}{\sigma^2}\right) , \forall n \in\mathcal{N},\\
		&\sum_{n=1}^{N}p_n^D(t)\leq P_{MEC},
	\end{align}
\end{subequations}
	where $\pmb{r}^U(t)=[r_1^U(t),\cdots,r_N^U(t)]^T$. Problem \eqref{32} is convex and the optimal solution can be obtained in a closed form as given in the following corollary.
	\begin{coro} \label{coro1}
		\emph{The optimal solution of \eqref{32} is given by
		\begin{align} 
			p_n^D(t)&=\min\left\{\max\left\{0,\Omega_1\right\},\frac{\sigma^2}{h_n(t)}\left(e^{\frac{Q_n^D(t)\ln 2}{B\tau_n^D(t)}}-1\right)\right\},\label{33}\\
		f_n^L(t)&=\left\{\begin{aligned}
			0,\quad\quad\quad\quad\quad\quad\quad &\text{if } \rho_n>\tilde W_2(t);\\
			\min\left\{\Omega_2,f_n^{\max}\right\},\quad &\text{otherwise},
		\end{aligned}
		\right.\label{34}\\
	r_n^U(t)&=\left\{\begin{aligned}
		&0,\quad\quad\quad\quad\quad\quad\quad\quad\quad\quad\quad\quad\quad\quad \text{if } \omega_n(t)\geq0;\\
		&\min\left\{\Omega_3, B\log_2\left(1+\frac{h_n(t)p_n^{\max}}{\sigma^2}\right)\right\}, \text{otherwise},
	\end{aligned}\label{35}
	\right.
\end{align}
	where $\mu$ and $\rho_n$ are non-negative Lagrangian multipliers which can be obtained by bisection method as shown in Algorithm~\ref{algB},  $\Omega_1=\frac{\tilde{W}_n^4(t)\tau_n^D(t)B}{\ln 2\left(V\tau_n^D(t)+\mu\right)}-\frac{\sigma^2}{h_n(t)}$, $\Omega_2=\sqrt{\frac{\tilde{W}_n^2(t)-\rho_n}{3VI_n\kappa_n}}$, $\Omega_3=B\log_2 \frac{-\omega_n(t)Bh_n(t)}{\tau_n^U(t)\ln2V\sigma^2}$, 
	$\omega_n(t)=\left(\frac{a\tilde{W}_n^1(t)}{\beta_n(t)^k}-\frac{\tilde{W}_n^2(t)}{\beta_n(t)}+\tilde{W}_n^3(t)\right)\tau_n^U(t)+\rho_n\tau_n^U(t)\left(\frac{1}{\beta_n(t)}-\frac{a}{\beta_n(t)^k}\right)$. }
	\end{coro}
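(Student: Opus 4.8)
The plan is to identify \eqref{32} as a convex program with only affine constraints, invoke the KKT conditions as necessary and sufficient, and then solve the stationarity equations one scalar variable at a time in closed form.

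\emph{Convexity.} First I would verify that \eqref{32} is convex. After collecting terms, the causality constraint in \eqref{32} reads $\tau f_n^L(t)/I_n+\tau_n^U(t)r_n^U(t)\bigl(1/\beta_n(t)-a/\beta_n(t)^k\bigr)-Q_n^L(t)\le 0$, which is affine in $(f_n^L(t),r_n^U(t))$ for fixed $\beta_n(t),\tau_n^U(t)$; the remaining constraints --- the upper bound on $p_n^D(t)$, the box bounds on $f_n^L(t)$ and $r_n^U(t)$, and the coupling constraint $\sum_n p_n^D(t)\le P_{MEC}$ --- are affine as well. In the objective, $\tau\kappa_n f_n^L(t)^3$ is convex on $f_n^L(t)\ge 0$, $\tau_n^U(t)\frac{\sigma^2}{h_n(t)}\bigl(e^{r_n^U(t)\ln 2/B}-1\bigr)$ is convex, $-\tilde W_n^4(t)\tau_n^D(t)R_n^D(t)$ is convex since $R_n^D(t)$ is concave in $p_n^D(t)$, and everything else is linear; since all constraints are affine, Slater's condition holds and the KKT system characterizes the optimum.

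\emph{Per-variable stationarity.} I would attach a multiplier $\mu\ge 0$ to the coupling constraint and multipliers $\rho_n\ge 0$ to the $N$ causality constraints, while handling the box constraints by clipping the unconstrained minimizer (valid because, given $\mu$ and $\rho_n$, the Lagrangian $\mathcal{L}$ is separable and convex in each scalar variable). Differentiating: $\partial\mathcal{L}/\partial p_n^D(t)=0$ yields the water-filling-type stationary point $\Omega_1$, which after projection onto $\bigl[0,\frac{\sigma^2}{h_n(t)}(e^{Q_n^D(t)\ln 2/(B\tau_n^D(t))}-1)\bigr]$ gives \eqref{33}. The $f_n^L(t)$-terms of $\mathcal{L}$ are $-(\tilde W_n^2(t)-\rho_n)\frac{\tau}{I_n}f_n^L(t)+V\tau\kappa_n f_n^L(t)^3$, so $\partial\mathcal{L}/\partial f_n^L(t)=0$ gives $f_n^L(t)^2=(\tilde W_n^2(t)-\rho_n)/(3VI_n\kappa_n)$, i.e.\ $\Omega_2$, when $\tilde W_n^2(t)-\rho_n>0$; when $\rho_n\ge\tilde W_n^2(t)$ the objective is nondecreasing in $f_n^L(t)$ and the optimum is $0$, and otherwise one clips to $[0,f_n^{\max}]$, giving \eqref{34}. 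Finally, gathering the coefficient of the term linear in $r_n^U(t)$ --- which receives contributions from the objective and from the $\rho_n$-weighted causality term --- into exactly $\omega_n(t)$, the $r_n^U(t)$-subproblem becomes $\min_{r_n^U(t)\ge 0}\,\omega_n(t)r_n^U(t)+V\tau_n^U(t)\frac{\sigma^2}{h_n(t)}(e^{r_n^U(t)\ln 2/B}-1)$; it is nondecreasing when $\omega_n(t)\ge 0$, forcing $r_n^U(t)=0$, and otherwise the first-order condition gives $e^{r_n^U(t)\ln 2/B}=-\omega_n(t)Bh_n(t)/(\tau_n^U(t)\ln 2\,V\sigma^2)$, i.e.\ $\Omega_3$, clipped by the uplink rate bound in \eqref{32} to give \eqref{35}.

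\emph{Recovering the multipliers and the main obstacle.} The values of $\mu$ and $\rho_n$ follow from complementary slackness: $\mu=0$ if $\sum_n p_n^D(t)\le P_{MEC}$ already holds at $\mu=0$, else $\mu$ is the unique root of $\sum_n p_n^D(t;\mu)=P_{MEC}$, which is monotone decreasing in $\mu$; each $\rho_n$ is pinned down similarly by the slack of its causality constraint, monotone in $\rho_n$. Hence both are obtained by bisection, which is what Algorithm~\ref{algB} implements. The delicate point --- and the main obstacle --- is the coupling between $f_n^L(t)$ and $r_n^U(t)$ through the shared $\rho_n$: the signs must be tracked carefully so the reduced coefficient of $r_n^U(t)$ collapses to precisely $\omega_n(t)$, and the monotone boundary cases for $f_n^L(t)$ and $r_n^U(t)$ must be treated separately rather than assuming an interior stationary point; once monotonicity in $(\mu,\rho_n)$ is in hand, the outer bisection is routine.
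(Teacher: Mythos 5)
Your proposal is correct and follows essentially the same route as the paper's Appendix B: partially dualize the coupling power constraint and the per-TD causality constraints with multipliers $\mu$ and $\rho_n$, obtain each of $p_n^D(t)$, $f_n^L(t)$, $r_n^U(t)$ from the separable per-variable first-order conditions with clipping to the box bounds (yielding $\Omega_1$, $\Omega_2$, $\Omega_3$ and the sign conditions on $\rho_n-\tilde W_n^2(t)$ and $\omega_n(t)$), and recover the multipliers by bisection. Your added remarks on convexity, Slater's condition, and monotonicity of the constraint slacks in the multipliers are consistent with, and slightly more explicit than, the paper's argument.
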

\begin{proof}
	Please refer to Appendix B.
\end{proof}

\begin{algorithm}[t]   
	\algsetup{linenosize= \footnotesize}
	 \footnotesize
	\caption{\textcolor{black}{Optimal Local Computing Rate and Transmission Power Algorithm}}
	\begin{algorithmic}[1]         \label{algB}
		\STATE {\bf Initialization:}  $\mu_{ub}\leftarrow$ sufficiently large,  $\mu_{lb}\leftarrow 0$, required precision $\epsilon_1,\epsilon_2$. 
		
		\STATE  {\bf Repeat:}
		\STATE   \hspace*{0.2in}  Set $\mu \leftarrow \frac{\mu_{lb}+\mu_{ub}}{2}$;\\
		\STATE \hspace*{0.2in} Obtain $p_n^D(t)$ according to \eqref{33}. 
		\STATE \hspace*{0.2in} {\bf If} $\sum_{n=1}^N p_n^D(t)- P_{MEC} < 0$:\\
		\hspace*{0.4in} $\mu_{ub}\leftarrow\mu$;\\
		\hspace*{0.2in} {\bf Else:}\\
		\hspace*{0.4in} $\mu_{lb}\leftarrow\mu$.\\
		\hspace*{0.2in} {\bf End}\\
		{\bf Until:} $\mu_{ub}-\mu_{lb}\leq\epsilon_1$. 
		\STATE {\bf For} $n=1:N$:
		\STATE \hspace*{0.2in} Set $\rho_n^{ub}\leftarrow$ sufficient large, $\rho_n^{lb}\leftarrow 0$,
		\STATE \hspace*{0.2in} {\bf Repeat:}
		\STATE \hspace*{0.4in} Set $\rho_n=(\rho_n^{ub}+\rho_n^{lb})/2$,\\
		\STATE\hspace*{0.4in}  Obtain $f_n^L(t)$ and $r_n^U(t)$ according to \eqref{34} and \eqref{35}, respectively. 
		
		\STATE \hspace*{0.4in} {\bf If} $\tau\frac{f_n^L(t)}{I_n}+\frac{\tau_n^U(t)r_{n}^U(t)}{\beta_n(t)}- Q_n^L(t)-\frac{a\tau_n^U(t)r_n^U(t)}{\beta_n(t)^k}<0$:\\
		\hspace*{0.6in} $\rho_n^{ub}\leftarrow\rho_n$;\\
		\hspace*{0.4in} {\bf Else:}\\
		\hspace*{0.6in} $\rho_n^{lb}\leftarrow\rho_n$.\\
		\hspace*{0.4in} {\bf End}\\
		\hspace*{0.2in} {\bf Until:} $\rho_n^{ub}-\rho_n^{lb}\leq\epsilon_2$.\\
		{\bf End}
		
		\STATE  {\bf Output:} optimal $p_n^D(t)$, $f_n^L(t)$ and $r_n^U(t)$. 
	\end{algorithmic}
\end{algorithm}

\textcolor{black}{
	The optimal local computing rate and transmit power algorithm flow is summarized in Algorithm~\ref{algB}. From Steps 2 to 5, the optimal $\mu$ is obtained by bisection method and the optimal downlink transmit power is then obtained according to \eqref{33}. Subsequently, from Steps 6 to 11, the optimal $\{\rho_n\}$ are solved in a parallel manner and then the optimal local computing rate and uplink transmission rate are obtained respectively according to \eqref{34} and \eqref{35}. }

\textit{Remark:} According to Corollary \ref{coro1}, the optimal $p_n^D(t)$ increases with $Q_n^D(t)$ because when the length of the downlink transmission queue is large, more transmit power is required in order to keep the downlink queue stable. The optimal $f_n^L(t)$ and $r_n^U(t)$ both increase with $Q_n^L(t)$ as it is desirable to enlarge local processing rate and uploading rate to keep the length of the local queue small. Moreover, $f_n^L(t)$ and $r_n^U(t)$ increase with $X_n^r(t)$. This can be explained by that a large $X_n^r(t)$ indicates the processing rate is lower than required. Thus, we have to increase $f_n^L(t)$ and $r_n^U(t)$ so as to enlarge the processing rate. Furthermore, as $V$ increases, $p_n^D(t)$, $f_n^L(t)$ and $r_n^U(t)$ all becomes small. This is due to the fact that when $V$ is large, more weights are put on energy consumption than queue length. Therefore, smaller $p_n^D(t)$, $f_n^L(t)$ and $r_n^U(t)$ are preferred to bring in less energy consumption.

  	\subsubsection{Extraction factor Optimization}
  	The extraction factor optimization problem can be decomposed into $N$ subproblems with the same structure, which can be solved in a parallel manner. With given local computing rate, transmit power, and time division variable, problem \eqref{P5} can be decomposed as    
  	 	\begin{subequations} \label{Digestion factor Optimization}
  		\begin{align}
  			\hspace{-0em}\min_{\beta_n(t)}\  & \tau_n^U(t) r_n^U(t)\left(\frac{a\tilde{W}_n^1(t)}{\beta_n(t)^k}-\frac{\tilde{W}_n^2(t)}{\beta_n(t)}\right)\\
  			\textrm{s.t.} \  & \tau\frac{f_n^L(t)}{I_n}\!+\!\frac{\tau_n^U(t)r_{n}^U(t)}{\beta_n(t)}\!\leq\! Q_n^L(t)\!+\!\frac{a\tau_n^U(t)r_n^U(t)}{\beta_n(t)^k},\\
  			&\beta_n^{\min}\leq \beta_n(t)\leq 1.
  		\end{align}
  	\end{subequations}
  Since  the convexity of (\ref{Digestion factor Optimization}a) with respect to $\beta_n(t)$ cannot be guaranteed, through substituting $\tilde{\beta}_n(t)=\frac{1}{\beta_n(t)}$ for $\beta_n(t)$, problem (\ref{Digestion factor Optimization}) can be rewritten as
 \begin{subequations} \label{Digestion factor Optimization1}
 	\begin{align}
 		\hspace{-0.5em}\min_{\tilde \beta_n(t)}\  & \tau_n^U(t) r_n^U(t)\left(a\tilde{W}_n^1(t)\tilde \beta_n(t)^k-\tilde{W}_n^2(t)\tilde\beta_n(t)\right)\\
 		\textrm{\textrm{s.t.}} \  & \tau\frac{f_n^L(t)}{I_n}\!+\!\tau_n^U(t)r_{n}^U(t)\tilde\beta_n(t)\!\leq\! Q_n^L(t)\!+\!a\tau_n^U(t)r_n^U(t)\tilde\beta_n(t)^k\!,\\
 		&1\leq \tilde\beta_n(t)\leq \frac{1}{\beta_n^{\min}}.
 	\end{align}
 \end{subequations}
  whose objective function is convex now. However, RHS of (\ref{Digestion factor Optimization1}b) is non-concave. To this end, we apply first-order Taylor's formula to approximate $\tilde\beta_n(t)^k$ as
  \begin{align} \label{taylor}
  	\tilde\beta_n(t)^k\geq k\left(\tilde{\beta}_n^{(r)}(t)\right)^{k-1}\tilde\beta_n(t)+(1-k)\left(\tilde\beta_n^{(r)}(t)\right)^k. 
  \end{align}
  Through plugging \eqref{taylor} into (\ref{Digestion factor Optimization1}b), problem \eqref{Digestion factor Optimization1} is reformulated as
   \begin{small}\begin{subequations} \label{Digestion factor Optimization2}
   	\begin{align}
   		\min_{\tilde \beta_n(t)} \  & \tau_n^U(t) r_n^U(t)\left(a\tilde{W}_n^1(t)\tilde \beta_n(t)^k-\tilde{W}_n^2(t)\tilde\beta_n(t)\right)\\
   		\textrm{\textrm{s.t.}}\  & \tau\frac{f_n^L(t)}{I_n}+\tau_n^U(t)r_{n}^U(t)\tilde\beta_n(t)\leq Q_n^L(t)+a\tau_n^U(t)r_n^U(t)\nonumber\\
   		&\times\left[k\left(\tilde{\beta}_n^{(r)}(t)\right)^{k-1}\tilde\beta_n(t)+(1-k)\left(\tilde\beta_n^{(r)}(t)\right)^k\right],\\
   		&1\leq \tilde\beta_n(t)\leq \frac{1}{\beta_n^{\min}},
   	\end{align}
   \end{subequations}\end{small}which is convex now. Through solving problem \eqref{Digestion factor Optimization2}, we have the following theorem. 
  \begin{theorem} \label{coro2}
  	\emph{The optimal $\beta_n(t)$ exhibits a  \textit{threshold-based} structure
  	\begin{align} \label{coro2eq}
  		\beta_n(t)\!=\!\left\{
  		\begin{aligned}
  			&1,\  \text{if } \tau_n^U(t) r_n^U(t) \!\left[\xi\!-\!\tilde{W}_n^2(t)\!-\!\xi a k\left(\tilde{\beta}_n^{(r)}(t)\right)^{k-1}\!\right]\!\!\geq\! 0 \\
  			&\quad \text{ or } \Omega_4<1;\\
  			&\max\left\{\frac{1}{\Omega_4},\beta_n^{\min}\right\},\quad  \text{otherwise}.
  		\end{aligned}\right.
  	\end{align}
  where $\xi\geq 0$ is dual variable with respect to (\ref{Digestion factor Optimization2}b) and   $\Omega_4=\sqrt[k-1]{\frac{\tilde{W}_n^2(t)-\xi+\xi a k\left(\tilde{\beta}_n^{(r)}(t)\right)^{k-1}}{ka\tilde{W}_n^1(t)}}$.}
  \end{theorem}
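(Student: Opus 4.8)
The plan is to treat \eqref{Digestion factor Optimization2} as what it is after the SCA step, namely a one-dimensional \emph{convex} program in $\tilde\beta_n(t)$ (the objective is convex because $k>1$, $\tilde W_n^1(t)\ge 0$, $\tau_n^U(t)r_n^U(t)\ge 0$; constraint (\ref{Digestion factor Optimization2}b) is affine and (\ref{Digestion factor Optimization2}c) is a box), so its KKT conditions are both necessary and sufficient. First I would attach a single multiplier $\xi\ge 0$ to the affine constraint (\ref{Digestion factor Optimization2}b) and keep the box $1\le\tilde\beta_n(t)\le 1/\beta_n^{\min}$ aside, to be imposed afterwards by clipping. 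This is legitimate: for a fixed $\xi$ the partial Lagrangian $\phi(\tilde\beta_n(t))$ (objective plus $\xi$ times the residual of (\ref{Digestion factor Optimization2}b)) is convex in $\tilde\beta_n(t)$, hence its minimizer over an interval is just the projection of its unconstrained minimizer onto that interval, and by convex Lagrangian duality the constrained minimizer of the original objective coincides with this once $\xi$ is chosen to satisfy complementary slackness on (\ref{Digestion factor Optimization2}b).

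Second, I would differentiate $\phi$. Collecting the $\xi$-linear terms gives
\[
\frac{\partial\phi}{\partial\tilde\beta_n(t)}=\tau_n^U(t)r_n^U(t)\Big[ka\tilde W_n^1(t)\,\tilde\beta_n(t)^{k-1}-\big(\tilde W_n^2(t)-\xi+\xi ak(\tilde\beta_n^{(r)}(t))^{k-1}\big)\Big],
\]
so the stationarity equation $\partial\phi/\partial\tilde\beta_n(t)=0$ has the unique positive root $\tilde\beta_n(t)=\Omega_4$ with $\Omega_4$ exactly as stated, provided the numerator $\tilde W_n^2(t)-\xi+\xi ak(\tilde\beta_n^{(r)}(t))^{k-1}$ is positive. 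If instead that numerator is $\le 0$ — which is precisely the first branch condition $\tau_n^U(t)r_n^U(t)[\xi-\tilde W_n^2(t)-\xi ak(\tilde\beta_n^{(r)}(t))^{k-1}]\ge 0$ after dividing by $\tau_n^U(t)r_n^U(t)\ge 0$ — then $\partial\phi/\partial\tilde\beta_n(t)\ge 0$ for all $\tilde\beta_n(t)>0$, so $\phi$ is non-decreasing and the optimum is the left endpoint $\tilde\beta_n(t)=1$, i.e. $\beta_n(t)=1$. When $\Omega_4$ exists, the feasible minimizer is $\mathrm{clip}(\Omega_4,1,1/\beta_n^{\min})$; passing back through $\beta_n(t)=1/\tilde\beta_n(t)$ turns $\Omega_4<1$ into $\beta_n(t)=1$ (the second branch condition), $1\le\Omega_4\le 1/\beta_n^{\min}$ into $\beta_n(t)=1/\Omega_4$, and $\Omega_4>1/\beta_n^{\min}$ into $\beta_n(t)=\beta_n^{\min}$; the latter two collapse to $\max\{1/\Omega_4,\beta_n^{\min}\}$, which yields \eqref{coro2eq}. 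Finally I would note that $\xi$ is pinned down by complementary slackness on (\ref{Digestion factor Optimization2}b) and is computed by a bisection in the spirit of Algorithm~\ref{algB}, and I would remark that the resulting all-or-compress rule is exactly the claimed \emph{threshold-based} structure: semantic compression is activated ($\beta_n(t)<1$) only when the effective marginal gain $\tilde W_n^2(t)-\xi+\xi ak(\tilde\beta_n^{(r)}(t))^{k-1}$ crosses the threshold that makes $\Omega_4\ge 1$.

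The main obstacle I anticipate is keeping the case split honest — specifically confirming that constraint (\ref{Digestion factor Optimization2}b) does not open a separate active-set branch that the closed form misses. The delicate point is that the coefficient of $\tilde\beta_n(t)$ in (\ref{Digestion factor Optimization2}b), namely $\tau_n^U(t)r_n^U(t)\big(1-ak(\tilde\beta_n^{(r)}(t))^{k-1}\big)$, can be of either sign, so (b) can act either as an upper or as a lower bound on $\tilde\beta_n(t)$; the clean way around this is the observation above that its effect is already absorbed into $\phi$ through $\xi$, after which only the two box endpoints remain and the clipping argument applies verbatim. A minor secondary case to dispatch is $\tilde W_n^1(t)=0$ (linear objective, $\Omega_4$ undefined): there the objective strictly decreases in $\tilde\beta_n(t)$, so the optimum is $\tilde\beta_n(t)=1/\beta_n^{\min}$, i.e. $\beta_n(t)=\beta_n^{\min}$, consistent with reading $\Omega_4=+\infty$ in \eqref{coro2eq}.
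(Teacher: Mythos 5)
Your proposal follows essentially the same route as the paper's Appendix C: form the partial Lagrangian of the SCA-convexified problem \eqref{Digestion factor Optimization2} with multiplier $\xi$ on (\ref{Digestion factor Optimization2}b), differentiate in $\tilde\beta_n(t)$, split on the sign of the $\tilde\beta_n$-independent part (giving the endpoint $\tilde\beta_n(t)=1$) versus the unique positive root $\Omega_4$ clipped to $[1,1/\beta_n^{\min}]$, then invert $\beta_n(t)=1/\tilde\beta_n(t)$ and determine $\xi$ by bisection. Your extra observations (convexity/KKT sufficiency and the degenerate $\tilde W_n^1(t)=0$ case) are compatible refinements, not a different method.
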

  \begin{proof}
  	Please refer to Appendix C.
  \end{proof}

\textit{Remark:} According to Theorem \ref{coro2}, the optimal extraction factor $\beta_n(t)$ decreases as $X_n^r(t)$ and $X_n^q(t)$ becomes large. This is because a longer queue $X_n^r(t)$ implies that system processing rate is lower than required. Thus, reducing the extraction factor can enlarge the processing rate. Meanwhile,  a longer queue $X_n^q(t)$ indicates that the data queue is longer than required. Therefore, we can also reduce the extraction factor to keep the data queue shorter. 

  	\subsubsection{Time Division Optimization}
  	With given local computing rate, transmit power, and extraction factor  variable in problem \eqref{P5}, time division optimization problem can be decomposed into the following $N$ parallel subproblems: 
  	  	\begin{small}\begin{subequations} \label{time division1}
  		\begin{align}
  			&\hspace{-2em}\min_{\tau_n^U(t),\tau_n^D(t)}\  \left(Vp_n^D(t)-\tilde{W}_n^4(t)R_n^D(t)\right)\tau_n^D(t)\nonumber\\
  			&\hspace{-2em}+\bigg(\frac{\tilde{W}_n^1(t)a r_n^U(t)}{\beta_n(t)^k}-\frac{\tilde{W}_n^2(t)r_n^U(t)}{\beta_n(t)}+\tilde{W}_n^3(t)r_n^U(t)+Vp_n^U(t)\bigg)\tau_n^U(t)\\ 
  			\textrm{\textrm{s.t.}} \ &   \tau\frac{f_n^L(t)}{I_n}+\frac{\tau_n^U(t)R_{n}^U(t)}{\beta_n(t)}\leq Q_n^L(t)+\frac{a\tau_n^U(t)R_n^U(t)}{\beta_n(t)^k},\\
  			&\tau_n^D(t) R_n^D(t) \leq Q_n^D(t),\\  
  			&\tau_n^U(t)+\tau_n^D(t)\leq \tau,\\
  			&\tau_n^U(t),\tau_n^D(t)\geq 0,
  		\end{align}
  	\end{subequations}\end{small}which is a linear programming problem.  
  	Denote $\xi^U=\frac{\tilde{W}_n^1(t)a r_n^U(t)}{\beta_n(t)^k}-\frac{\tilde{W}_n^2(t)r_n^U(t)}{\beta_n(t)}+\tilde{W}_n^3(t)r_n^U(t)+Vp_n^U(t)$ and $\xi^D=Vp_n^D(t)-\tilde{W}_n^4(t)R_n^D(t)$ as the coefficients of $\tau_n^U(t)$ and $\tau_n^D(t)$, respectively. Due to the uncertainty of whether constraint (\ref{time division1}b) restrict the upper bound or lower bound of $\tau_n^U(t)$, we need to consider the following two cases.
  	
  	Case 1: $\frac{1}{\beta_n(t)}-\frac{a}{\beta_n(t)^k}\geq 0$. In this case, $\tau_n^U(t)$ is upper bounded by $\theta=\frac{Q_n^L(t)-\tau\frac{f_n^L(t)}{I_n}}{R_n^U(t)\left[\frac{1}{\beta_n(t)}-\frac{a}{\beta_n(t)^k}\right]}$. Thus, the optimal solution of problem \eqref{time division1} can be given by
  	\begin{small}
  	\begin{align} \label{time division2}
  		&\left(\tau_n^U(t),\tau_n^D(t)\right)=
  		\nonumber\\ 
  		&\left\{
  		\begin{aligned}
  			&\left(0,0\right),\quad \quad \quad \quad \quad \quad \quad \quad \quad \quad \quad \quad \quad\quad \quad \quad  \text{if } \xi^U\geq0, \xi^D\geq0 ;\\
  			&\left(0,\min\left(\tau,\frac{Q_n^D(t)}{R_n^D(t)}\right)\right),\quad \quad \quad \quad \quad\quad \quad\quad\ \  \text{if } \xi^U\geq0, \xi^D<0;\\
  			&\left(\min\left(\tau,\theta\right),0\right),  \quad\quad\quad\quad\quad\quad\quad\quad\quad\quad\quad\quad \ \text{if } \xi^U<0, \xi^D\geq0;\\
  			&\left(\min\left(\tau,\theta\right),\tau-\min\left(\tau,\theta\right)\right),  \quad\quad\quad\quad\quad\quad\quad \text{if } \xi^U\leq\xi^D<0;\\
  			&\left(\tau-\min\left(\tau,\frac{Q_n^D(t)}{R_n^D(t)}\right),\min\left(\tau,\frac{Q_n^D(t)}{R_n^D(t)}\right)\right), \text{if } \xi^D\leq\xi^U<0.\\
  		\end{aligned}
  		\right.
  	\end{align}
  \end{small}
  	
  		Case 2: $\frac{1}{\beta_n(t)}-\frac{a}{\beta_n(t)^k}< 0$. In this case, $\tau_n^U(t)$ is lower bounded by $\theta$. Thus, the optimal solution to problem \eqref{time division1} can be given by
  		\begin{small}
  \begin{align}\label{time division3}
  		&\hspace{0em}\left(\tau_n^U(t),\tau_n^D(t)\right)=\hspace{0em}\nonumber\\
  		&\left\{
  		\begin{aligned}
  			&\left(\min(\tau,\max(0,\theta)),0\right),\quad \text{if } \xi^U\geq0, \xi^D\geq0 ;\\
  			&\left(\min(\tau,\max(0,\theta)),\min\left(\tau,\frac{Q_n^D(t)}{R_n^D(t)},\tau-\min(\tau,\max(0,\theta))\right)\right),\\
  			&\quad\quad\quad\quad\quad\quad\quad\quad\quad\quad \text{if } \xi^U\geq0, \xi^D<0 \text{ or } \xi^D\leq\xi^U<0;\\
  			&\left(\tau,0\right), \quad\quad\quad\quad\quad\quad\quad \text{if } \xi^U<0, \xi^D\geq0 \text{ or } \xi^U\leq\xi^D<0.
  		\end{aligned}
  		\right.
  	\end{align}
  	\end{small}

  	\subsection{Remote Computing Capacity Allocation Problem}
	According to problem \eqref{P4},  remote computing capacity allocation problem can be formulated by:
  	\begin{subequations}  \label{remote comp1}
  		\begin{align}
  			\min_{\pmb{f}^O(t)}\  & \sum_{n=1}^N\!\bigg[\!\!\left(\tilde{W}_n^4(t)H_n(t)\tau\!\!-\!\tilde{W}_n^3(t)\tau\right)\!\frac{f_n^O(t)}{G_n(t)I_n}\!+\!\!V\tau\kappa_M f_n^O(t)^3\!\bigg]\!,\\
  			\textrm{\textrm{s.t.}} \ 
  			&\tau \frac{f_n^O(t)}{G_n(t)I_n} \leq Q_n^O(t),\quad \forall n \in\mathcal{N},\\ &\sum_{n=1}^N f_n^O(t)\leq F_{MEC},
  		\end{align}
  	\end{subequations}
  	which is convex and the following corollary holds. 
  	\begin{coro} \label{coro3}
  		\emph{The optimal $f_n^O(t)$ is given by\begin{align}
  			f_n^O(t)\!=\!\left\{\begin{aligned}
  				&0,\quad\quad\quad\quad\quad\quad\quad\quad\quad\quad\quad\quad\   \text{if } \varphi_n(t)+\nu>0;\\
  				&\!\min\!\left\{\!\sqrt{\!-\!\frac{\varphi_n(t)\!+\!\nu}{3V\tau\kappa_M}},\frac{Q_n^O(t)G_n(t)I_n}{\tau}\right\}\!, \ \text{otherwise},
  			\end{aligned}
  			\right.
  		\end{align}
  	where $\nu\geq0$ is the  Lagrangian multiplier corresponding to (\ref{remote comp1}c)  and $\varphi_n(t)=\frac{\tilde{W}_n^4(t)H_n(t)\tau-\tilde{W}_n^3(t)\tau}{G_n(t)I_n}$. }
  	\end{coro}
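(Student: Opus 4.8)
The plan is to solve the convex program \eqref{remote comp1} through its Karush--Kuhn--Tucker (KKT) conditions, exploiting that the objective is separable over the TDs once the shared budget constraint (\ref{remote comp1}c) is dualized. I would first attach a nonnegative multiplier $\nu$ to $\sum_{n=1}^N f_n^O(t)\le F_{MEC}$ and retain the per-TD causality bound (\ref{remote comp1}b) together with the implicit nonnegativity $f_n^O(t)\ge 0$ as the box $0\le f_n^O(t)\le Q_n^O(t)G_n(t)I_n/\tau$. Since all constraints are affine, strong duality holds and the KKT conditions are necessary and sufficient; fixing $\nu$, the partial Lagrangian splits into $N$ independent scalar problems $\min\ \big(\varphi_n(t)+\nu\big)f_n^O(t)+V\tau\kappa_M f_n^O(t)^3$ over that box, where $\varphi_n(t)=\frac{\tilde W_n^4(t)H_n(t)\tau-\tilde W_n^3(t)\tau}{G_n(t)I_n}$.

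For each scalar subproblem I would inspect the derivative $\varphi_n(t)+\nu+3V\tau\kappa_M f_n^O(t)^2$. If $\varphi_n(t)+\nu>0$, this derivative is strictly positive on $[0,\infty)$, so the objective is increasing and the minimizer is the left endpoint $f_n^O(t)=0$. If $\varphi_n(t)+\nu\le 0$, equating the derivative to zero gives the unconstrained stationary point $\sqrt{-(\varphi_n(t)+\nu)/(3V\tau\kappa_M)}\ge 0$; by strict convexity of the cubic-plus-linear objective, the constrained optimum is the projection of this point onto the box, and since the stationary point is nonnegative only the upper endpoint can become active, yielding $f_n^O(t)=\min\{\sqrt{-(\varphi_n(t)+\nu)/(3V\tau\kappa_M)},\,Q_n^O(t)G_n(t)I_n/\tau\}$. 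This reproduces the claimed closed form.

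It remains to pin down $\nu$ via complementary slackness. I would evaluate the candidate solution at $\nu=0$: if $\sum_n f_n^O(t)\le F_{MEC}$ already, then $\nu=0$ is optimal; otherwise the budget is tight and $\nu>0$ is the unique root of $\sum_n f_n^O(t)=F_{MEC}$. Uniqueness and tractability follow because each $f_n^O(t)$ is continuous and nonincreasing in $\nu$ (strictly decreasing while interior to the box), so the sum is monotone and a bisection on $\nu$, analogous to Algorithm~\ref{algB}, converges. The main obstacle is the boundary bookkeeping: confirming that the lower bound $f_n^O(t)\ge 0$ never binds except in the degenerate $\varphi_n(t)+\nu>0$ regime, and that strict convexity legitimizes replacing the constrained optimizer by the projected stationary point. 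Once the monotonicity of the subproblem derivative is established, the remaining steps are routine.
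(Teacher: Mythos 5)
Your proposal is correct and follows essentially the same route as the paper's Appendix D: dualize the shared MEC-capacity constraint with multiplier $\nu$, decompose into per-TD scalar problems with the cubic-plus-linear objective over the box $[0,\,Q_n^O(t)G_n(t)I_n/\tau]$, analyze the sign of the derivative $\varphi_n(t)+\nu+3V\tau\kappa_M f_n^O(t)^2$ to obtain the two-case closed form, and recover $\nu$ by bisection. The extra remarks on strong duality, projection onto the box, and monotonicity of $f_n^O(t)$ in $\nu$ are consistent refinements of the same argument rather than a different method.
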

  	\begin{proof}
  		Please refer to Appendix D.
  	\end{proof}

  \textit{Remark:} According to Corollary \ref{coro3}, the optimal $f_n^O(t)$ increases with $Q_n^O(t)$ since it is intuitive to allocate more computation capacity to keep remote processing queue stable. Moreover, $f_n^O(t)$ decreases as $Q_n^D(t)$ becomes large. This is because the output of remote processing queue is the input of downloading queue $Q_n^D(t)$. In order to keep $Q_n^D(t)$ stable, a smaller $f_n^O(t)$ is preferred.  Additionally, a larger $V$ results in a smaller $f_n^O(t)$ in order to reduce energy consumption at the cost of a longer queue length.

  	\vspace{-3ex}
  	\subsection{Performance Analysis}
  	\begin{algorithm}[t] 
  		\algsetup{linenosize= \footnotesize}   
  		\small
  		\caption{Lyapunov Optimization Based Online Optimization Algorithm}
  		\begin{algorithmic}[1]         \label{overall algorithm}
  			\STATE  {\bf Initialize:} $t=0$, $T$. 
  			\STATE {\bf While} $t\leq T$: 
  			\STATE  \hspace*{0.2in} Obtain system states $Q_n^L(t),Q_n^O(t),Q_n^D(t),X_n^q(t),X_n^r(t)$, $h_n(t),A_n(t),\pmb{\chi}_n(t)$. \\
  			\STATE\hspace*{0.2in} {\bf repeat} \\
  			\STATE\hspace*{0.4in} Solve local computing rate and transmit power problem \eqref{32}. \\
  			\STATE\hspace*{0.4in}  Solve extraction factor problem \eqref{Digestion factor Optimization}.\\
  			\STATE\hspace*{0.4in}  Solve time division problem \eqref{time division1}.\\
  			\STATE  \hspace*{0.2in} {\bf until} the objective value (\ref{P5}a) converges.\\
  			\STATE\hspace*{0.2in}  Solve remote computing capacity allocation problem \eqref{remote comp1}. \\
  			\STATE \hspace*{0.2in}Update $Q_n^L(t),Q_n^O(t),Q_n^D(t),X_n^q(t),X_n^r(t),\pmb{\chi}_n(t)$ and set $t=t+1$.
  			\STATE {\bf End}
  		\end{algorithmic}
  	\end{algorithm}
  \begin{figure}[t]
  	\centering
  	\includegraphics[width=0.5\textwidth]{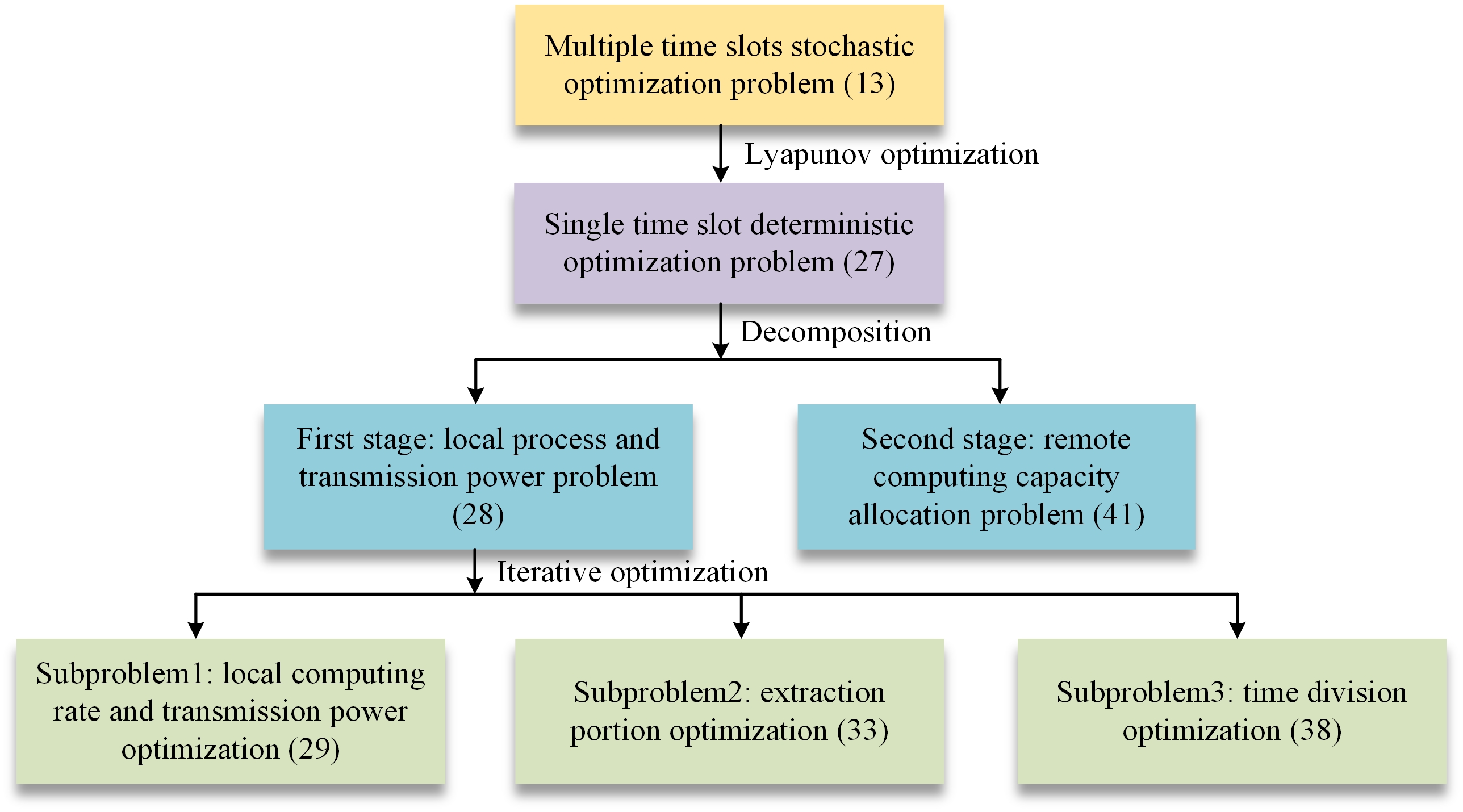}
  	\caption{\textcolor{black}{The overall procedure  of the proposed algorithm.}} \label{algorithm flow fig}
  	\vspace{-1.5em}
  \end{figure} 
  	\textcolor{black}{The overall Lyapunov optimization based online optimization algorithm is summarized in Algorithm~\ref{overall algorithm} and the overall algorithm procedure  of is depicted in Fig.~\ref{algorithm flow fig}. 
  	At each time slot, deterministic problem \eqref{P4} is solved once.  The complexity of solving deterministic problem \eqref{P4} lies in solving four subproblems: 
  	1) For local computing rate and transmit power optimization,   the bisection method for searching numbers of downlink transmit power optimization, as well as local computing rate and uploading power optimization can be  estimated as $\mathcal{O}\left(\log_2 \left(1/\epsilon_1\right)\right)$ and $\mathcal{O}\left(\log_2 \left(1/\epsilon_2\right)\right)$, respectively. Therefore, according to Corollary \ref{coro1}, the complexity of this subproblem is $\mathcal{O}\left(2N\log_2 \left(1/\epsilon_1\right)+N\log_2 \left(1/\epsilon_2\right)\right)$. 2) For extraction factor optimization, the iteration number of SCA algorithm can be estimated as $\mathcal{O}\left(1/\sqrt{\epsilon_5}\right)$, where $\epsilon_5$ is the accuracy of SCA algorithm \cite{cot}. Due to the fact that at each iteration, the bisection method will be adopted once, the complexity of solving this problem can be given by $\mathcal{O}\left(N/\sqrt{\epsilon_5}\log_2 \left(1/\epsilon_3\right)\right)$, where $\epsilon_3$ is the accuracy of dual variable $\xi$. 3) For time division optimization, the complexity is given by $\mathcal{O}(N)$ according to \eqref{time division2} and \eqref{time division3}. 4) For remote computing capacity allocation problem, the complexity is $\mathcal{O}\left(2N\log_2\left(1/\epsilon_4\right)\right)$, where $\epsilon_4$ is the precision of dual variable $\nu$. In summary, the proposed algorithm can be implemented with linear complexity. 
  	Next, we analyze the convergence of the proposed BCD algorithm. The objective of \eqref{P5} is guaranteed to be nonincreasing solving each of the above subproblems. Moreover, there exists a lower bound of the  objective.  Therefore, the proposed algorithm is guaranteed to converge to a sub-optimal solution. 
  	As can be seen, the proposed algorithm specifies the solution of a stochastic optimization problem measured from a long-term perspective to each time shot. In addition, the algorithm does not require any  information from future system states, nor does it needs to know the probability distributions of random events. }
  	\textcolor{black}{\begin{lemma} \label{lemma 2}
  		\emph{Denote $E^{opt}$ as the minimum objective (\ref{P1}a) over the feasible region.  Assuming that problem \eqref{P1} is feasible and there exists a certain $\psi > 0$ and $\Phi(\psi)>E^{opt}$, we have the following properties:
  		\begin{itemize}
  			\item All the virtual queues are mean rate stable and time average constraints  are satisfied with probability 1. 
  			\item The upper bound of the time average system energy consumption under the proposed algorithm is given by
  			\begin{align}
  				\lim_{T\rightarrow\infty}\frac{1}{T}\sum_{t=1}^T\sum_{n=1}^N\mathbb{E}[E_n(t)]\leq E^{opt} +\frac{\Xi+\Gamma}{V}, 
  			\end{align}
  			where $\Gamma$ is the upper bound of the difference between the RHS of \eqref{theorem_1_eq} and its $\Gamma$-approximation (\ref{P4}a) over the feasible region.   
  			\item The time average sum lengths of all the virtual queues is upper bounded by
  			\begin{align}
  				\lim_{T\rightarrow\infty}\frac{1}{T}\sum_{t=1}^T\sum_{n=1}^N\mathbb{E}\left[X_n^q(t)+X_n^r(t)\right]\nonumber\\
  				\leq \frac{\Xi+\Gamma+V\left(\Phi(\psi)-E^{opt}\right)}{\psi}.
  			\end{align}
  		\end{itemize} }
  	\end{lemma}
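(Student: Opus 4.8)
The plan is to follow the standard drift-plus-penalty argument of Lyapunov optimization (as in \cite{Neely2010Stochastic}), using Theorem~\ref{theorem_1} as the starting point. First I would observe that the proposed algorithm, by solving \eqref{P4} at each slot, greedily minimizes --- up to the $\Gamma$-approximation error introduced by the relaxations \eqref{max_appro}--\eqref{min_appro} --- the right-hand side of \eqref{theorem_1_eq} over the feasible set $(\ref{P1}d)$--$(\ref{P1}j)$ given the observed state $\pmb{\Theta}(t)$. Hence, for \emph{any} alternative feasible decision rule --- in particular for any stationary randomized ($\omega$-only) policy whose decisions depend only on the current channel gains and task arrivals --- the drift-plus-penalty under the proposed policy is bounded above by the same expression evaluated at that alternative policy, plus $\Gamma$.

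Next I would invoke the feasibility assumption. Since \eqref{P1} is feasible, a standard existence argument for $\omega$-only policies guarantees that for the prescribed $\psi>0$ there is a stationary randomized policy achieving expected per-slot energy $\Phi(\psi)$ while producing expected net increments of at most $-\psi$ to every virtual queue. Substituting such a policy into the RHS of \eqref{theorem_1_eq}, and using that under an $\omega$-only policy the conditional expectation of each queue increment is independent of $\pmb{\Theta}(t)$, all the queue-dependent terms collapse, yielding
\begin{align}
	\Delta_V(\pmb{\Theta}(t)) \le \Xi + \Gamma + V\,\Phi(\psi) - \psi \sum_{n=1}^N \left(X_n^q(t)+X_n^r(t)\right).
\end{align}
Then I would take total expectation, sum this inequality over $t=1,\dots,T$, telescope $\mathbb{E}[L(\pmb{\Theta}(t+1))-L(\pmb{\Theta}(t))]$, and use $L(\pmb{\Theta}(\cdot))\ge 0$ together with the finiteness of the initial Lyapunov value. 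Rearranging and dividing by $VT$, then dropping the nonnegative queue term and letting $T\to\infty$, gives the energy bound $E^{opt}+(\Xi+\Gamma)/V$ (absorbing $\Phi(\psi)-E^{opt}$ into the gap as stated, or pushing $\psi\downarrow 0$ by continuity). Dividing the same telescoped inequality by $\psi T$ and dropping the penalty term gives the queue-length bound $(\Xi+\Gamma+V(\Phi(\psi)-E^{opt}))/\psi$. The telescoped inequality also forces $\mathbb{E}[L(\pmb{\Theta}(T))]/T\to 0$, hence $\mathbb{E}[|X_n^q(t)|]/t\to 0$ and $\mathbb{E}[|X_n^r(t)|]/t\to 0$; mean rate stability then recovers the time-average constraints $(\ref{P1}b)$--$(\ref{P1}c)$ via Lemma~\ref{lemma1}.

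I expect two steps to be the main obstacles. First, rigorously justifying the $\Gamma$-approximation: one must show that the gap between the true RHS of \eqref{theorem_1_eq} and the reconstructed objective $(\ref{P4}a)$ is \emph{uniformly} bounded by a constant $\Gamma$ over the whole feasible region, which requires bounding each dropped $\max/\min$ term using the finite per-slot maxima ($C_n^{\max}$, the peak rates $R_n^{\cdot,\max}$, etc.) already used to build $\Xi$. Second, upgrading the expectation bounds to the ``with probability 1'' statement: this needs a pathwise strong-stability / supermartingale argument applied to the Lyapunov process (or a direct appeal to the sample-path results of \cite{Neely2010Stochastic}), rather than the elementary expectation telescoping above. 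Once the per-slot drift inequality is in place, the energy and queue-length bounds themselves are routine algebra.
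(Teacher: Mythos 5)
Your proposal is correct and follows essentially the same route as the paper: the paper's own proof is a one-line appeal to \cite[Theorem 4.8]{Neely2010Stochastic}, and your argument is precisely the standard drift-plus-penalty proof of that theorem (comparison with an $\omega$-only Slater policy, telescoping the drift from Theorem~\ref{theorem_1}, and dividing by $VT$ or $\psi T$), with the $\Gamma$-approximation handled as the paper intends. The two obstacles you flag (uniform boundedness of the $\max/\min$ relaxation gap and the probability-1 statement) are exactly the points the paper delegates to the cited theorem, so no gap in substance.
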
}
  	
  	\begin{proof}
  		This lemma comes as a direct consequence of  \cite[Theorem 4.8]{Neely2010Stochastic}.
  	\end{proof}
  	
  	\textcolor{black}{Lemma~\ref{lemma 2} claims that the proposed algorithm can guarantee the time average constraints (\ref{P4}b) and (\ref{P4}c). It also indicates that there exists a limited optimality gap when solving problem \eqref{P4} rather than  problem \eqref{P1} directly. With control parameter $V$ rising, the system long-term energy consumption reduces inversely proportional to $V$. Meanwhile, the backlog of virtual increases with a speed of $V$.  In other words, there exists an $\left[\mathcal{O}(1/V),\mathcal{O}(V)\right]$ tradeoff between energy consumption and delay, which will be demonstrated in simulations. Therefore, the performance of the proposed algorithm can be analytically characterized which greatly facilitates the implementation of the required energy consumption and delay tradeoff in practical dynamic semantic-aware MEC systems. }

\section{Simulations}
	\begin{table}[t]
	\caption{Simulation parameters} 
	\label{table} 
	\centering 
	\begin{tabular}{|m{2.7cm}<{\centering}|m{2.3cm}<{\centering}|m{2.4cm}<{\centering}|}
		\hline
		\hline
		$\sigma^2=-174$ dBm/Hz&$B=2$ MHz&$I_n=70$ cycles/bit\\
		\hline
		$\kappa_n=10^{-26}$ & $\kappa_M=10^{-26}$ &$\tau=1$ second\\
		\hline
		$V=10^{16}$&$a=10^{-3}$&$k=4$\\
		\hline
		$Q_n^{avg}=20$ Mbps & $R_n^{avg}=4$ Mbits&$U=1/100$\\
		\hline
		$f_n^{\max}=1$ GHz&$F_{MEC}=30$ GHz&$p_n^{\max}=0.3$ Watt\\
		\hline
		$P_{MEC}=1.5$ Watts& $\beta_n^{\min}=0.3$& $p=1$\\
		\hline
		\hline
	\end{tabular}\vspace{-2em}
\end{table}

\begin{figure*} 
	\centering 
	\subfigure[Energy consumption $E$ versus $t$]{\label{conv_fig:a}
		\includegraphics[width=0.32\linewidth]{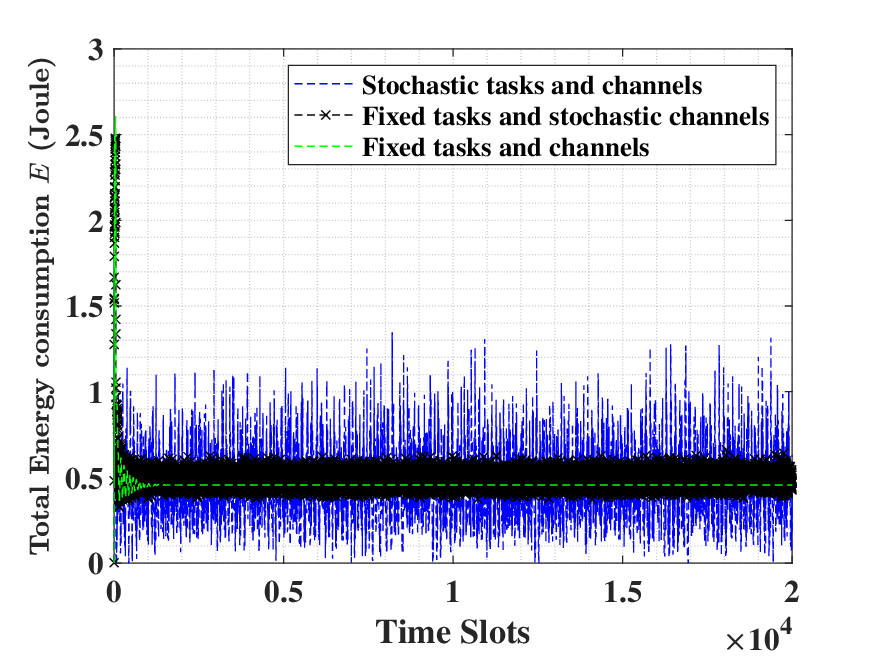}}
	\hspace{-0.01\linewidth}
	\subfigure[Average data queue length $Q^{total}$ versus $t$]{\label{conv_fig:b}
		\includegraphics[width=0.32\linewidth]{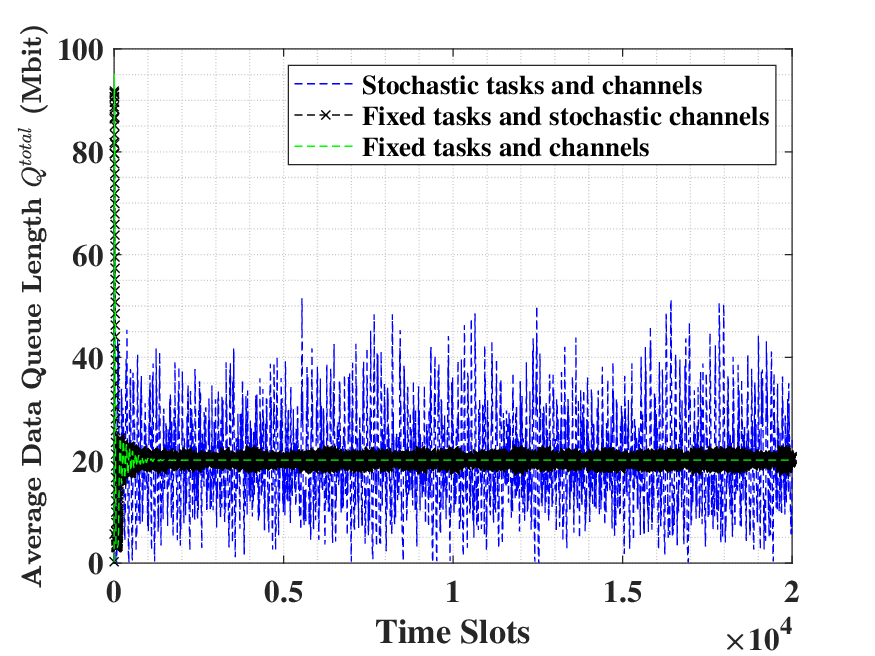}}
	\hspace{-0.01\linewidth}
	\subfigure[Average rate queue length $R$ versus $t$]{\label{conv_fig:c}
		\includegraphics[width=0.32\linewidth]{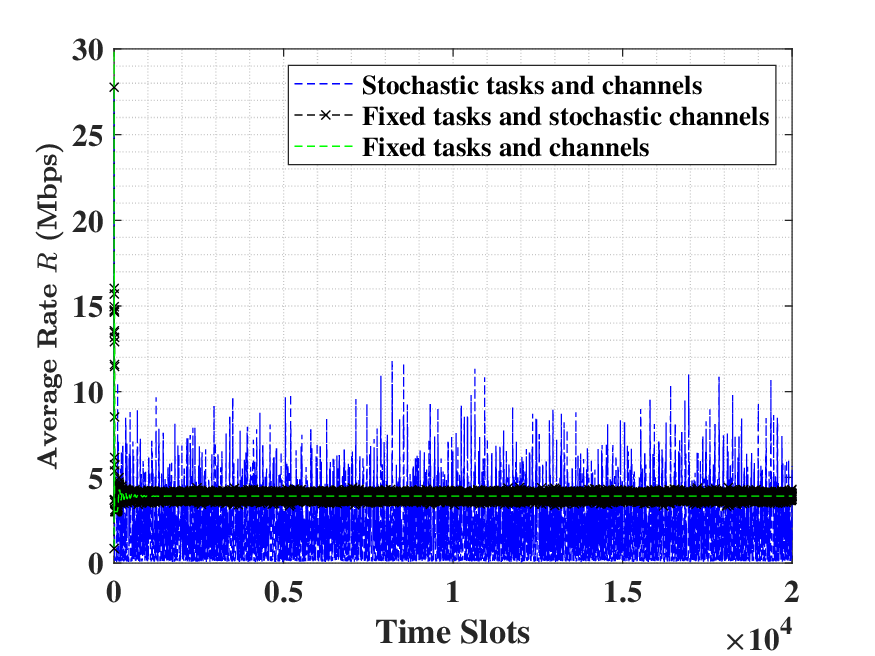}}
	\caption{Convergence performance under  dynamic and static scenarios.}
	\label{conv_fig}
\end{figure*}

In this section, numerical simulations are conducted to evaluate the 
performance of the proposed algorithm.  There are  $N=10$ TDs and the distances between TDs and BS are evenly set in $[120,255]$ meters. \textcolor{black}{In channel model, we set antenna gain $A=3$, carrier frequency $f_c=915$ MHz, path-loss factor $\ell=3$, speed of light $c=3\times10^8$ m/s, and the LoS link factor of Rician fading is $\gamma=0.3$.} A kind of object recognition task arriving at each time slot obeys exponential distribution with mean value $\mathbb{E}[A_n(t)]=\lambda_n=3$ Mbits.  
The other parameters are set as in Table~\ref{table} unless otherwise mentioned. All the simulations run over twenty thousand time  slots. \textcolor{black}{The following benchmark algorithms are provided to contrast with the proposed DRMSA algorithm: 1) No Semantic-aware Algorithm (NS)\cite{9449944}: the algorithm without semantic aware technology, i.e., $\beta_n(t)=1$ for all TDs and time slots. This algorithm corresponds to the conventional dynamic MEC scenario that TDs directly upload raw data to the MEC server. 2) No Local Computing Algorithm (NL)\cite{9242286}: the algorithm without local computing units. 
3) Myopic Method \cite{8771176}: the greedy algorithm ignores the effects of queues states and minimizes energy consumption under the processing rate constraint at each time slot. 
4) Exhausting Search Algorithm (EXH): the algorithm randomly sets $50$ initial points in Algorithm~\ref{overall algorithm} at each time slot in order to obtain a near global optimal solution. }     

To evaluate the performance of the DRMSA algorithm under different environments,  Fig.~\ref{conv_fig} illustrates the convergence performance under three kinds of scenarios: stochastic tasks and channels; fixed tasks and stochastic channels; fixed tasks and channels. Specifically, stochastic tasks mean that the sizes of tasks follow an exponential distribution with mean value $\lambda_n=3$ Mbits, while the fixed tasks indicate that the  tasks at each time slot are exactly $3$ Mbits.  Stochastic channels mean that the channel states vary across different time slots while channels are stable in fixed channels scenarios. According to Fig.~\ref{conv_fig}, this algorithm converges in less than $2000$ time slots under all the previously mentioned environments.  The energy consumption, as well as queue states fluctuate sharply in the beginning. Afterward, the fluctuations are stable. This indicates the DRMSA algorithm can well adapt to dynamic scenarios. We can also find that all of these three scenarios achieve almost the same performance, i.e., an average of $0.45$ Joule/s in energy consumption, $20$ Mbits in data queue length, as well as $4$ Mbits in rate queue length, while the fluctuation of stochastic tasks and channels case is the largest, and that of the fixed tasks and channels case is the smallest.   This implies that the proposed algorithm can effectively adjust the resource allocation at each time slot by observing current states including arriving tasks, channel states, and queues backlogs to maintain energy efficiency.

\begin{figure*} 
	\centering 
	\subfigure[Energy consumption $E$ versus $V$]{\label{V_fig:a}
		\includegraphics[width=0.32\linewidth]{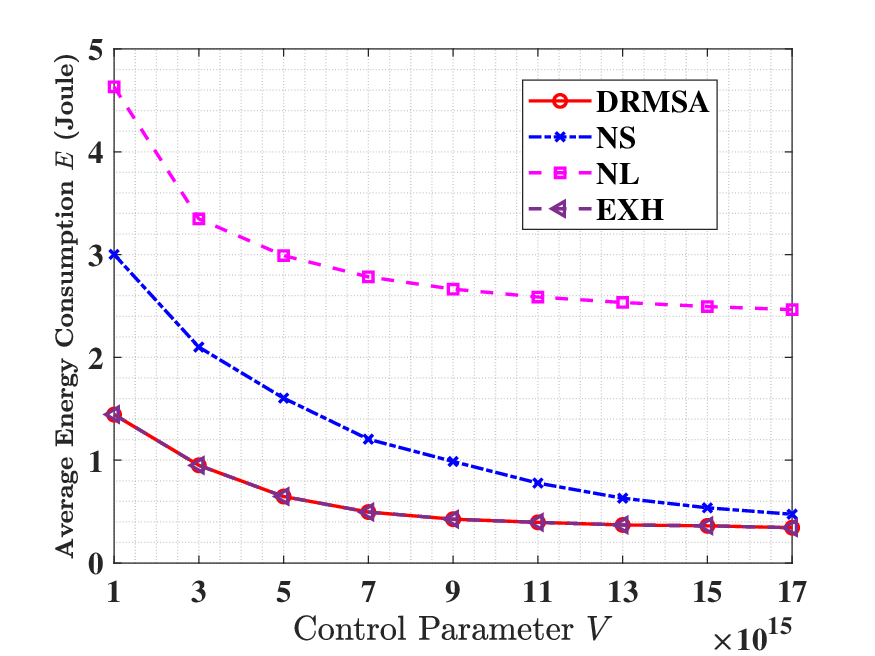}}
	\hspace{-0.01\linewidth}
	\subfigure[Average actual data queue length $Q^{total}$ versus $V$]{\label{}
		\includegraphics[width=0.32\linewidth]{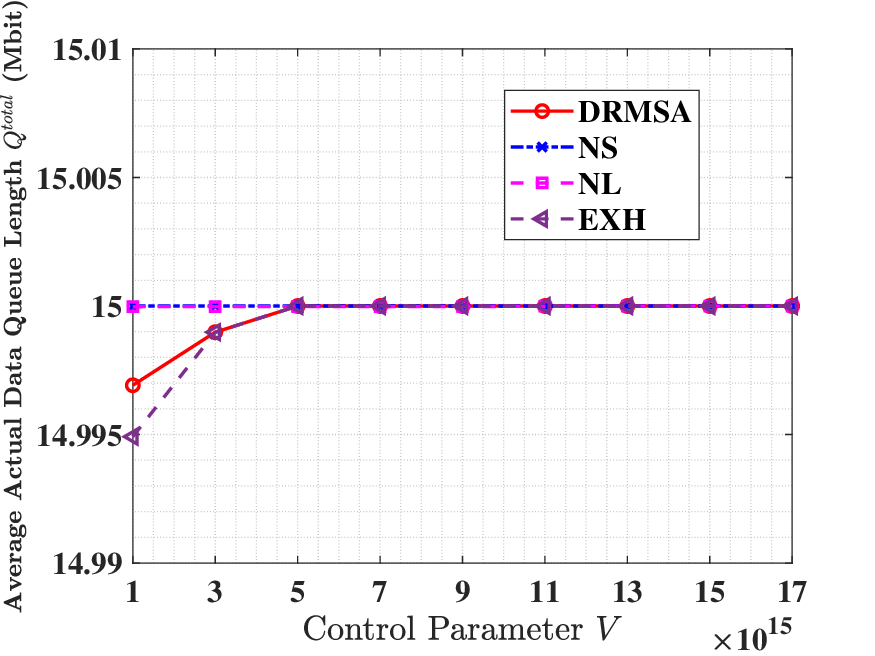}}
	\hspace{-0.01\linewidth}
	\subfigure[Average rate $R$ versus $V$]{\label{}
		\includegraphics[width=0.32\linewidth]{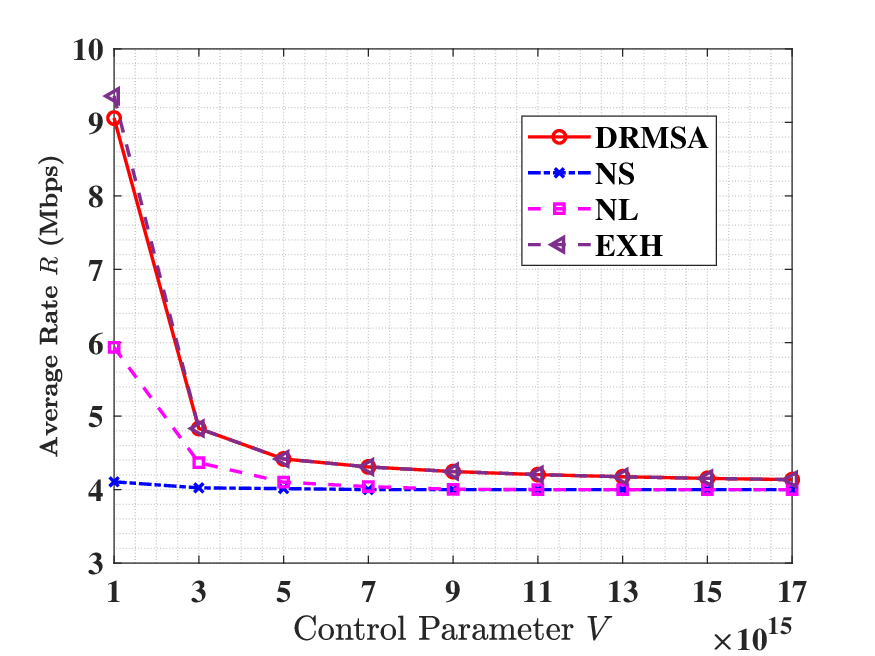}}
	
	
	\caption{\textcolor{black}{Performance comparisons between different algorithms under different control parameters $V$.}}
	\label{V_fig}
\end{figure*}

\textcolor{black}{Fig.~\ref{V_fig} shows the performance comparisons between different algorithms under different control parameters $V$. In Fig.~\ref{V_fig}(a), the energy consumption of all algorithms decreases with $V$. The DRMSA algorithm achieves almost the same performance as the EXH algorithm. Moreover,  the DRMSA algorithm demonstrates its  superiority over the other two schemes. 
This can be explained by that the NS algorithm has to upload the whole raw data that is much larger than extracted semantic information, leading to larger transmit power. For NL algorithm, since it cannot compute locally, the backlogs of queues are  more severe than those of DRMSA algorithm. High backlogs enforce additional energy to maintain queues stable. 
From Fig.~\ref{V_fig}(b) and Fig.~\ref{V_fig}(c), we can find that the actual data queue length is strictly smaller  than the  predefined $Q^{avg}$ and the rate is larger than $R^{avg}$, which demonstrates the efficiency of the proposed algorithm.  Furthermore, the DRMSA and EXH algorithm achieve the largest margins among all these algorithms in terms of actual data and rate queues length. This is due to the fact that the defects of NS and NL algorithms in processing rate, they have to enlarge queues backlogs in order to save energy. Last but not least, as the control parameter $V$ increases, the energy consumption reduces and virtual queue length becomes large. This proves the theoretical analysis of a  $[\mathcal{O}(1/V),\mathcal{O}(V)]$ tradeoff between system energy consumption and queues backlogs.}
  
\begin{figure} 
	\centering 
	\subfigure[Average Energy consumption $E/N$ versus $\lambda_n$]{\label{}
		\includegraphics[width=0.9\linewidth]{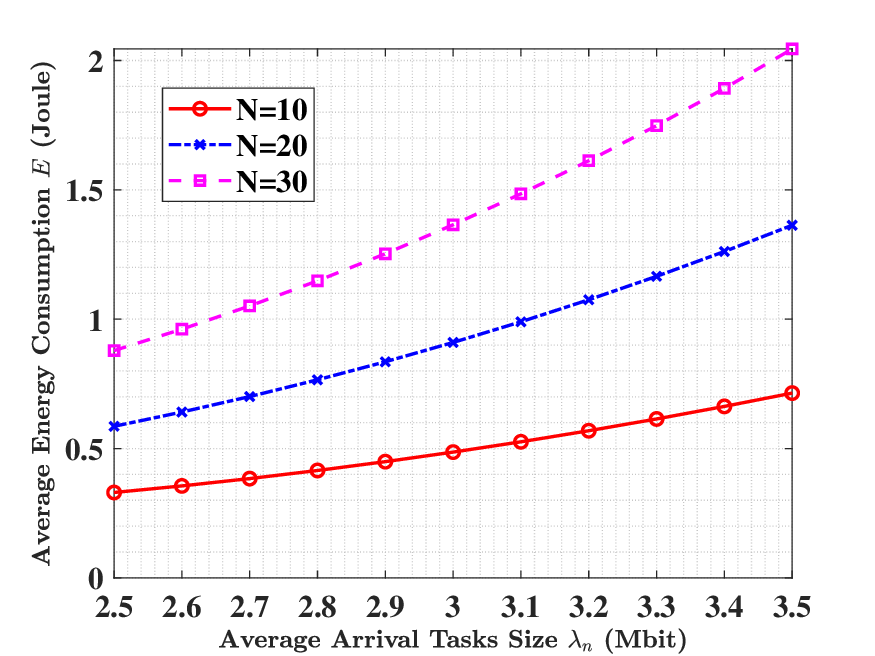}}
	\hspace{-0.01\linewidth}
	\hspace{-0.01\linewidth}
	\subfigure[Average rate $R$ versus $\lambda_n$]{\label{}
		\includegraphics[width=0.9\linewidth]{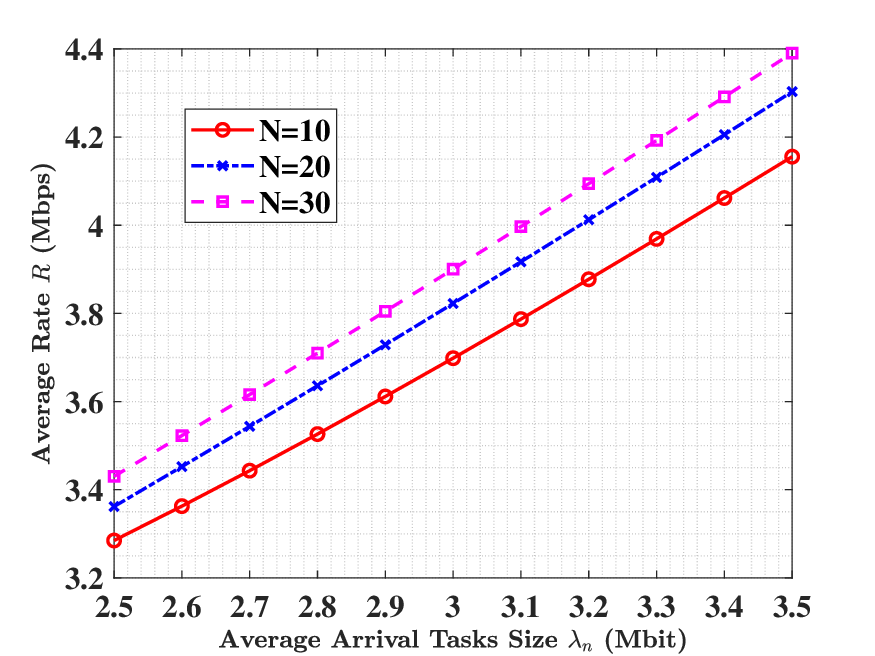}}
	
	
	\caption{\textcolor{black}{Performance versus $\lambda_n$ with different $N$.}}
	\label{A_fig}
\end{figure}
  
  \textcolor{black}{In Fig.~\ref{A_fig}, we present the performance comparisons under different tasks arrival sizes $\lambda_n$ and number of TDs $N$. As can be seen in Fig.~\ref{A_fig}(a), the average energy consumption increases with larger  arrival tasks size and more TDs due to the heavier network load. In Fig.~\ref{A_fig}(b), the average rate monotonously increases with tasks size. This is because as the size of tasks (i.e., the input of the queue) increases, the processing rate (i.e., the output of the queue) should be improved correspondingly in order to maintain data queue stability.} 
  
  \begin{figure} [t]
  	\centering 
  	
  		\includegraphics[width=0.9\linewidth]{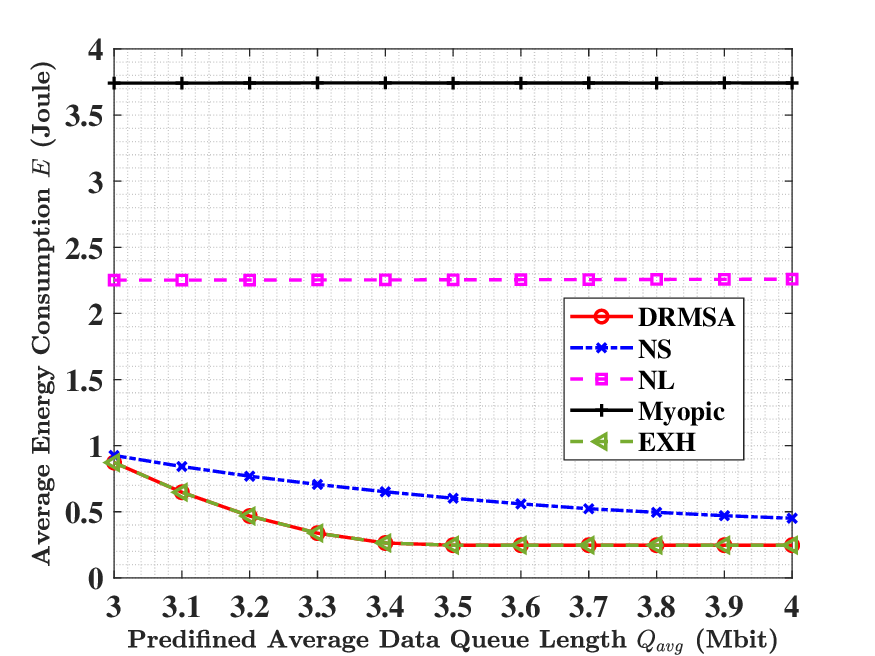}
  	
  	\caption{\textcolor{black}{Performance comparisons between different algorithms under different predefined data queue length $Q^{avg}$.}}
  	\label{Qavg_fig}
  \end{figure}
  
  \textcolor{black}{Performance comparisons between different algorithms with predefined data queue lengths vary from $3$ to $4$ Mbits are demonstrated in Fig.~\ref{Qavg_fig}. In the considered predefined data queue length region, the Myopic algorithm is unable to  stabilize data queues under the stochastic channels and tasks scenarios. Besides, it consumes the largest energy compared with other schemes which implies the necessity of long-term optimization in stochastic environment.  
  The energy consumption of the NL algorithm is larger than those of the DRMSA and EXH algorithms due to its weak processing rate. 
  We can also observe that as the average data queue length gets larger, the DRMSA algorithm becomes more energy efficient, implying that it can flexibly adjust semantic extraction proportions according to the data queue length requirements. 
 According to equation \eqref{Q_n^L(t+1)},  semantic extraction brings additional workloads which are reflected in the length of local processing queue. Meanwhile, the smaller extraction factor leads to more workloads as well as longer data queues. Therefore, as the predefined data queue length increases, the DRMSA algorithm has the opportunity to search for a lower extraction factor, which results in a higher processing rate, so as to reduce the energy consumption of the systems as shown in Fig.~\ref{Qavg_fig}. This phenomenon shows the energy efficiency of semantic-aware technology in MEC systems. Compared with NS, NL and Myopic algorithms, the proposed algorithm yields up to $41.8\%$, $83.58\%$ and $90.11\%$ energy reduction, respectively. }
  
  \begin{figure}
  	\centering 
  	\includegraphics[width=0.9\linewidth]{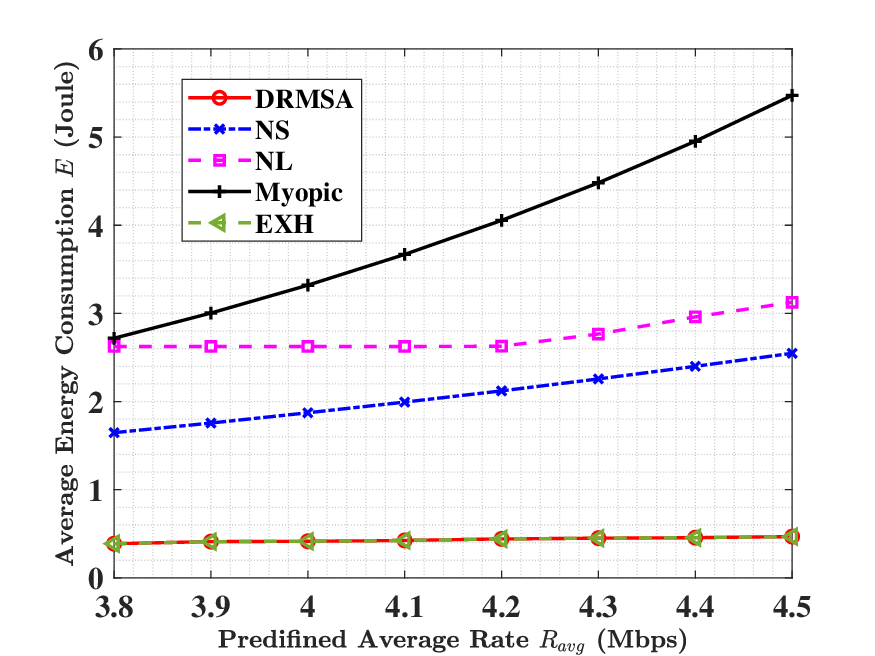}
  	\caption{\textcolor{black}{Performance comparisons between different algorithms under different predefined average rate $R^{avg}$.}} 
  	\label{Ravg_fig}
  \end{figure}
  
  \textcolor{black}{Fig.~\ref{Ravg_fig} presents performance of different algorithms with predefined average rates vary from $3.8$ to $4.5$ Mbps. All algorithms except Myopic algorithm satisfy the long-term data queue length and average rate constraints, while DRMSA and EXH algorithms achieves the lowest energy consumption. 
  The energy consumption of NL algorithm stays unchangeable at the beginning and then rises as the predefined rate increases. This can be explained by that 
  when the predefined rate is small,  NL algorithm has to promote its processing rate in order to stabilize data queues due to its weak processing capacity, while When the predefined rate is higher, the predefined rate constraints urge the system to speed up. 
  For the NS algorithm, since it has no semantic extracting processing unit, its processing rate is weak compared with DRMSA. Hence, the predefined rate constraints always work, which explains the monotonicity of NS algorithm in terms of energy consumption. Compared with other schemes, the energy consumption of DRMSA and EXH algorithms are less sensitive to the variations of the predefined average rate due to their flexible semantic extraction capacities. This shows the robustness of DRMSA algorithm in terms of high  workload.  }    
  
  \begin{figure*} 
  	\centering 
  	\subfigure[Average energy consumption $E/N$ versus $\beta^{\min}$]{\label{}
  		\includegraphics[width=0.32\linewidth]{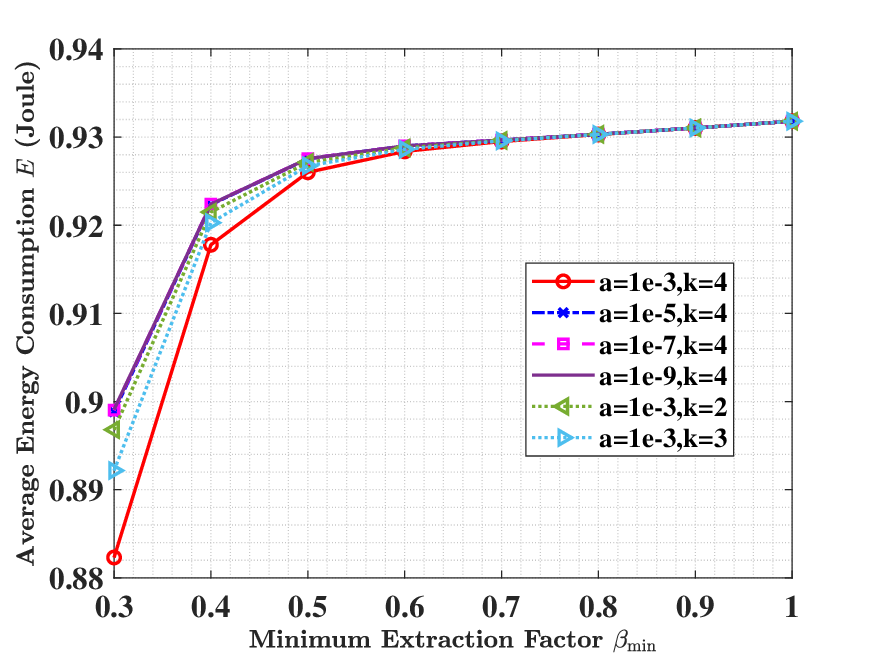}}
  	\hspace{-0.01\linewidth}
  	\subfigure[Average actual data queue length $Q^{total}$ versus $\beta^{\min}$]{\label{}
  		\includegraphics[width=0.32\linewidth]{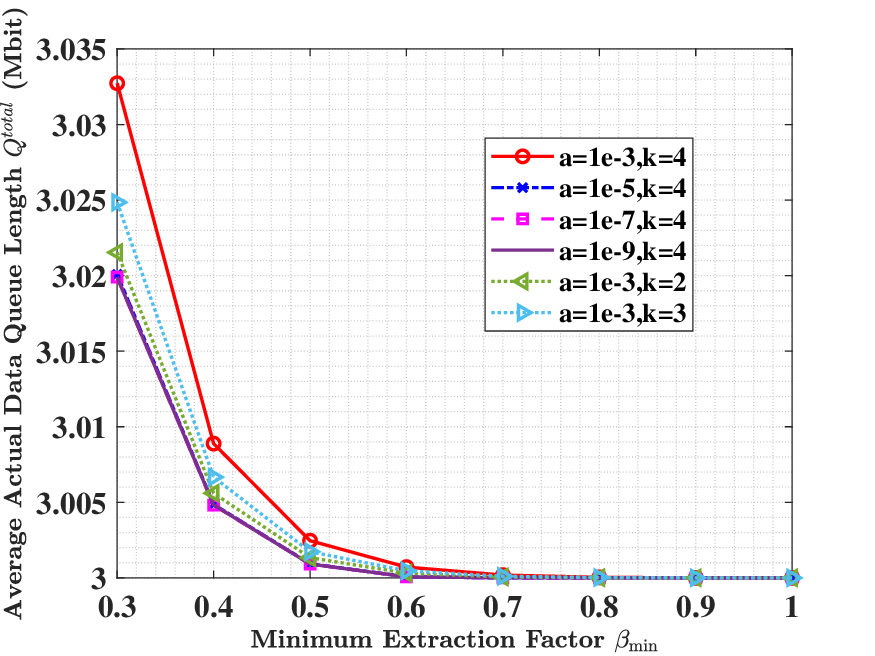}}
  	\hspace{-0.01\linewidth}
  	\subfigure[Average rate $R$ versus $\beta^{\min}$]{\label{}
  		\includegraphics[width=0.32\linewidth]{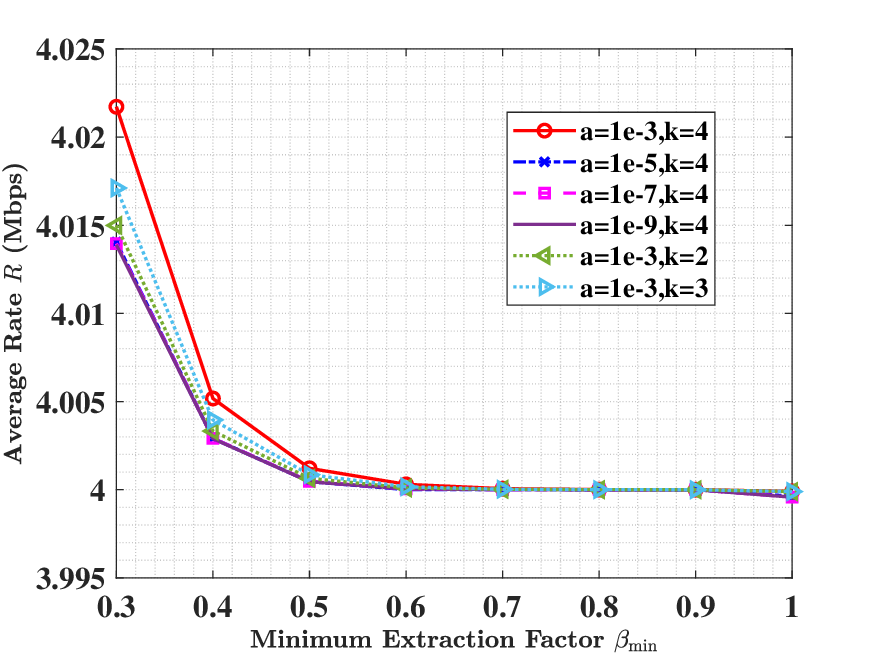}}
  	
  	
  	\caption{\textcolor{black}{Performance comparisons between different workload expressions $C_n(t)$ under different minimum extraction factors $\beta^{\min}$.}}
  	\label{beta_fig}
  \end{figure*}
  
  \textcolor{black}{To test the algorithm in applications to various semantic-aware scenarios, we plot the performance comparisons between different workload expressions under different minimum extraction factors in Fig.~\ref{beta_fig}. Note that the case when $\beta^{\min}=1$ is equivalent to NS algorithm. As can be seen, as the minimum extraction factor decreases from $1.0$ to $0.3$, the energy consumption reduces, data queue length, which can also be regarded as delay, and the average rate increases. This phenomenon reflects a tradeoff between delay and energy consumption in terms of extraction factor. This is due to the fact that a smaller extraction factor significantly reduces the amount of data to be uploaded, thus the transmit power is reduced and the  system average rate is promoted. Meanwhile, a smaller extraction factor means more workloads to implement more purified semantic extractions that brings additional time consumption.  Furthermore, the performance of the proposed algorithm is alike under different workload expressions. This shows its ability to apply to different semantic-aware scenarios.} 
  
%
%
%
%

    \begin{figure*} 
  	\centering 
  	\subfigure[Average energy consumption $E/N$ versus $\beta^{\min}$]{\label{}
  		\includegraphics[width=0.32\linewidth]{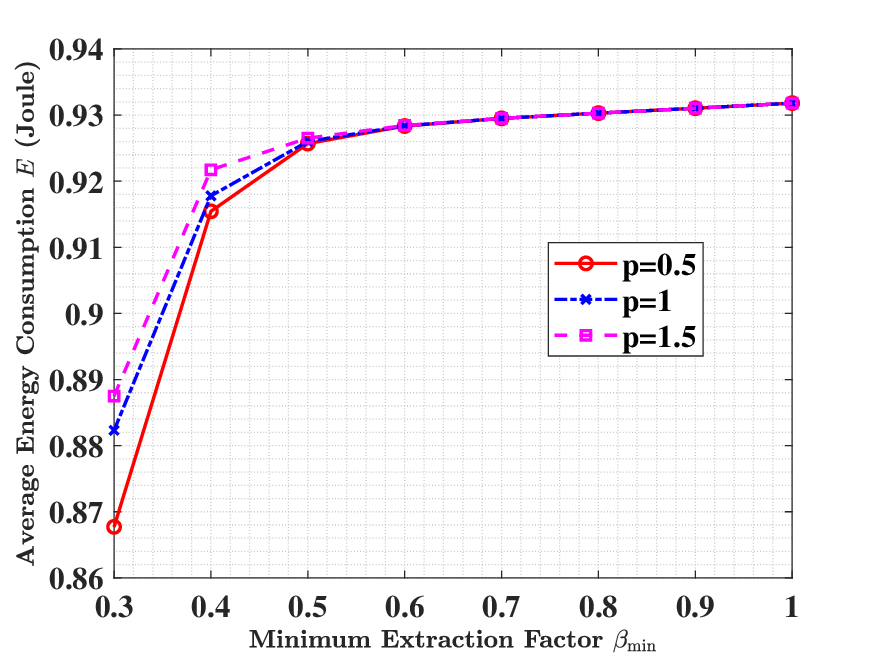}}
  	\hspace{-0.01\linewidth}
  	\subfigure[Average actual data queue length $Q^{total}$ versus $\beta^{\min}$]{\label{}
  		\includegraphics[width=0.32\linewidth]{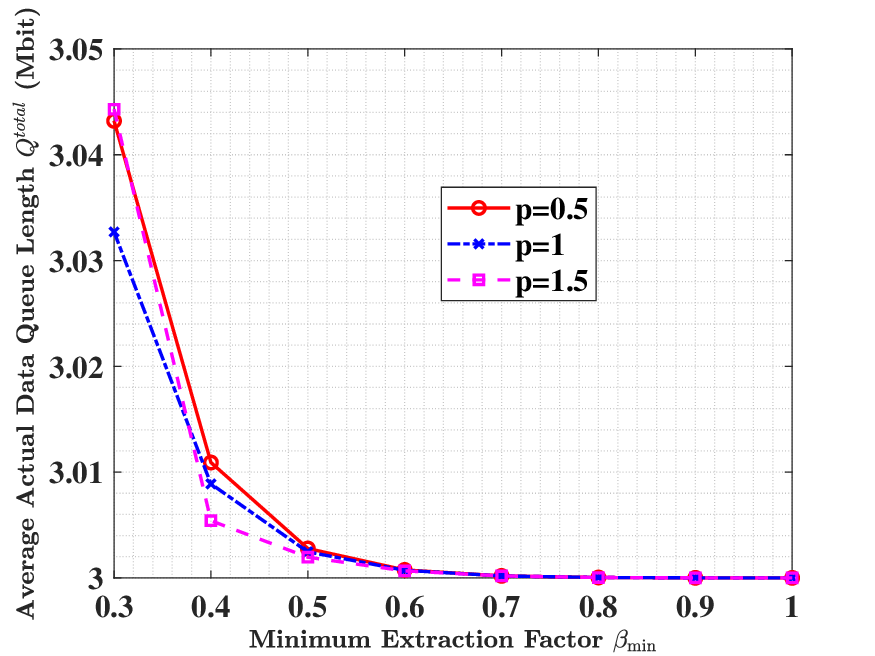}}
  	\hspace{-0.01\linewidth}
  	\subfigure[Average rate $R$ versus $\beta^{\min}$]{\label{}
  		\includegraphics[width=0.32\linewidth]{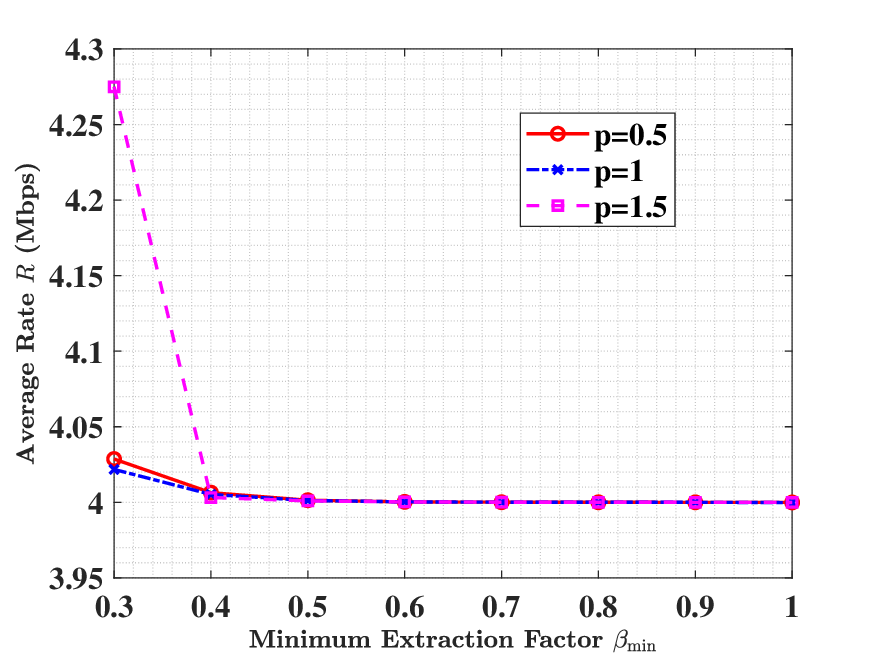}}
  	
  	
  	\vspace{-1em}
  	\caption{\textcolor{black}{Performance comparisons between different $G_n(t)$ under different minimum extraction factors.}}
  	\label{betaq_fig}
  \end{figure*}

\textcolor{black}{In Fig.~\ref{betaq_fig}, the performance of the proposed algorithm under different $G_n(t)=\frac{1}{\left(\min\{\pmb{\chi}_n(t)\}\right)^p}$ is depicted. The overall performance trends with respect to $\beta_{\min}$ are similar to those of Fig.~\ref{beta_fig}, which shows the proposed algorithm applies to the generality of $G_n(t)$. Moreover, a larger $p$ brings more energy consumption. This is due to the fact that $G_n(t)$ becomes larger when $p$ gets bigger, which indicates that processing a bit of semantic data requires more CPU cycles. Thus more remote computation resource is needed to maintain system queues stable resulting in more energy consumption.   }

  \section{Conclusion}
  In this paper, we have proposed a  joint communication and computation resource allocation framework for long-term MEC systems, where stochastic tasks arrive at TDs and channels vary with time, with the aid of prevalent semantic transmission technology. An optimization problem has been formulated to dynamically optimize semantic extraction factors, communication and computation resources to achieve the minimum long-term  energy consumption. A Lyapunov optimization-based low-complexity online algorithm is proposed to solve this stochastic optimization problem. Simulation results demonstrate the superior performance of
  the proposed algorithm over the benchmark schemes with respect to energy consumption. Furthermore, the tradeoff between energy consumption and delay is well balanced by the proposed algorithm. We also observe that energy consumption of the whole system is significantly reduced as the minimum semantic extraction factor gets small. 
  \textcolor{black}{For future works, we will further investigate semantic-aware MEC scenarios with more advanced communication techniques such as multiple-input multiple-output (MIMO), non-orthogonal multiple access (NOMA), and near-field transmission.   Moreover, we will extend our work to other scenarios such as satellite communications. We also expect to set up a hardware semantic-aware MEC platform with specific tasks in the future.}

  \vspace{-1.5em}
  \appendices
  \section{Proof of Theorem \ref{theorem_1}}  \label{proof_1}
  \setcounter{equation}{0}
  \renewcommand\theequation{A.\arabic{equation}}
 First, for virtual queue $X_n^q(t)$, we have 
 \begin{small}
 \begin{align}  \label{A.1}
 	&\frac{1}{2}X_n^q(t+1)^2-\frac{1}{2}X_n^q(t)^2\nonumber\\
 	 &\overset{(a)}{\leq} \frac{1}{2}\left(X_n^q(t)+Q_n^{total}(t+1)-Q_n^{avg}\right)^2-\frac{1}{2}X_n^q(t)^2\nonumber,\\
 	&\overset{(b)}{\leq} \frac{1}{2}Q_n^{total}(t+1)^2+\frac{1}{2}(Q_n^{avg})^2+X_n^q(t)(Q_n^{total}(t+1)-Q_n^{avg}), 
 \end{align}
\end{small}where inequality $(a)$ holds because of $(\max\{0,x\})^2\leq x^2$; $(b)$ is due to the deletion of non-positive term $-Q_n^{total}(t+1)Q_n^{avg}$. Further, we have
\begin{small}
 \begin{align}  \label{A.2}
 	&Q_n^{total}(t+1)^2=\left(Q_n^L(t+1)+\left(Q_n^O(t+1)+Q_n^D(t+1)\right)\right)^2\nonumber\\
 	&\overset{(c)}{\leq} 2Q_n^L(t+1)^2+2\left(Q_n^O(t+1)+Q_n^D(t+1)\right)^2\nonumber,\\
 	&\overset{(d)}{\leq} 2Q_n^L(t+1)^2+ 4Q_n^O(t+1)^2 +4Q_n^D(t+1)^2,
 \end{align}
\end{small}where inequalities $(c)$ and $(d)$ hold because of 	$(x+y)^2\leq2x^2+2y^2$. 
Through plugging \eqref{Q_n^L(t+1)}, \eqref{Q_n^O(t+1)} and \eqref{Q_n^D(t+1)} into \eqref{A.2}, we have
\begin{small}
 \begin{align}  \label{A.3}
 	&\frac{1}{2}Q_n^{total}(t+1)^2\overset{(e)}{\leq} \left(Q_n^L(t)+C_n(t)\right)^2 +A_n(t)^2+2Q_n^O(t)^2\nonumber\\
 	&+ \left(\tau R_n^L(t)+\frac{\tau_n^U(t)R_n^U(t)}{\beta_n(t)}\right)^2+4Q_n^O(t)\left(\tau_n^U(t)R_n^U(t)-\tau R_n^M(t)\right)\nonumber\\
 	&+2\left(Q_n^L(t)+C_n(t)\right)\left(A_n(t)-\tau R_n^L(t)-\frac{\tau_n^U(t)R_n^U(t)}{\beta_n(t)}\right)  \nonumber\\
 	&+ 2\tau^2R_n^M(t)^2 +  2\tau_n^U(t)^2R_n^U(t)^2+2Q_n^D(t)^2+2\tau_n^D(t)^2R_n^D(t)^2\nonumber\\
 	&+ 2H_n(t)^2\tau^2R_n^M(t)^2+4Q_n^D(t)\left(H_n(t)\tau R_n^M(t)-\tau_n^D(t)R_n^D(t)\right)\nonumber,\\
 	&\overset{(f)}{\leq} Q_n^L(t)^2+(C_n^{\max})^2+2Q_n^L(t)C_n(t)+A_n(t)^2+2Q_n^O(t)^2\nonumber\\
 	&+\tau^2\left(R_n^{L,\max}+\frac{R_n^{U,\max}}{\beta_n^{\min}}\right)^2+4Q_n^O(t)\left(\tau_n^U(t)R_n^U(t)-\tau R_n^M(t)\right)\nonumber\\
 	&+2\left(Q_n^L(t)+C_n^{\max}\right)\left(A_n(t)-\tau R_n^L(t)-\frac{\tau_n^U(t)R_n^U(t)}{\beta_n(t)}\right)\nonumber\\
 	&+2\tau^2(R_n^{M,\max})^2+2\tau^2(R_n^{U,\max})^2+2Q_n^D(t)^2+2\tau^2(R_n^{D,\max})^2\nonumber\\
 	&+2(H_n(t)\tau R_n^{M,\max})^2+4Q_n^D(t)\left(H_n(t)\tau R_n^M(t)-\tau_n^D(t)R_n^D(t)\right),
 \end{align}
\end{small}where $C_n^{\max}$ is the maximum workload for semantic extraction, $R_n^{L,\max}$ is the maximum local processing rate in \eqref{R_n^L(t)}, $R_n^{U,\max}$ is the maximum uplink transmission rate in \eqref{R_n^U(t)}, $R_n^{M,\max}$ is the maximum remote semantic processing rate in \eqref{R_n^M(t)}, and  $R_n^{D,\max}$ is the maximum downlink transmission rate in \eqref{R_n^D(t)}. In \eqref{A.3}, inequality $(e)$ is due to the fact that $(\max\{0,x-y\}+z)^2\leq x^2+y^2+z^2+2x(z-y)$, for all $x,y,z\geq0$, and $(f)$ holds because we scale quadratic terms to their upper bounds. 

 Besides, for virtual queue $X_n^r(t)$, we have 
 \begin{small}
 \begin{align} \label{A.5}
 	&\frac{1}{2}X_n^r(t+1)^2-\frac{1}{2}X_n^r(t)^2\nonumber\\
 	&\leq\frac{1}{2}\left(X_n^r(t)-\tau R_n^L(t)-\frac{\tau_n^U(t)R_{n}^U(t)}{\beta_n(t)}+\tau R_n^{avg}\right)^2-\frac{1}{2}X_n^r(t)^2\nonumber,\\
 	&\leq \frac{1}{2}\left(\tau R_n^{L,\max}+\frac{\tau R_{n}^{U,\max}}{\beta_n^{\min}}\right)^2+\frac{1}{2}\left(\tau R_n^{avg}\right)^2\nonumber\\
 	&-X_n^r(t)\left(\tau R_n^L(t)+\frac{\tau_n^U(t)R_{n}^U(t)}{\beta_n(t)}-\tau R_n^{avg}\right).
 \end{align}
\end{small}
 Finally, by plugging \eqref{A.1}, \eqref{A.3} \eqref{A.5} and $Q_n^{total}(t+1)$ into \eqref{Delta_V}, we can get \eqref{theorem_1_eq}. \hfill $\blacksquare$ 
\section{Proof of Corollary \ref{coro1}}
\label{}	
\setcounter{equation}{0}
\renewcommand\theequation{B.\arabic{equation}}
Let $\mu$ and $\pmb{\rho}=[\rho_1,\cdots,\rho_N]^T$ be non-negative Lagrangian multipliers corresponding to constraints (\ref{Digestion factor Optimization1}f) and (\ref{Digestion factor Optimization1}b), respectively. 
In this case, 
the optimization of $f_n^L(t)$, $r_n^U(t)$ and $p_n^D(t)$ can be done separately for individual $n$. Specifically, for each $n$, 
we solve the following problem:
\begin{small}
\begin{subequations} \label{B.2}
	\begin{align}
		&\min_{\overset{f_n^L(t),r_n^U(t),}{p_n^D(t)}}   -\frac{\tilde{W}_n^2(t)\tau}{I_n}f_n^L(t)-\tilde{W}_n^4(t)\tau_n^D(t)R_n^D(t)\nonumber\\
		&+\left(\frac{a\tilde{W}_n^1(t)}{\beta_n(t)^k}-\frac{\tilde{W}_n^2(t)}{\beta_n(t)}+\tilde{W}_n^3(t)\right)\tau_n^U(t) r_n^U(t)+\mu p_n^D(t)\nonumber\\
		&+V\left(\tau\kappa_n f_n^L(t)^3+\tau_n^U(t)\frac{\sigma^2}{h_n(t)}\left(e^{\frac{r_n^U(t)\ln 2}{B}}-1\right)+\tau_n^D(t)p_n^D(t)\right)\nonumber\\
		&+\rho_n\left(\tau\frac{f_n^L(t)}{I_n}+\frac{\tau_n^U(t)r_{n}^U(t)}{\beta_n(t)}-\frac{a\tau_n^U(t)r_n^U(t)}{\beta_n(t)^k}\right),\\
		&\textrm{\textrm{s.t.}}\quad
		p_n^D(t)\leq \frac{\sigma^2}{h_n(t)}\left(e^{\frac{Q_n^D(t)\ln 2}{B\tau_n^D(t)}}-1\right),\\
		&\quad\quad\  0\leq f_n^L(t)\leq f_n^{\max},\\
		&\quad\quad\  r_n^U(t)\leq B\log_2 \left(1+\frac{h_n(t)p_n^{\max}}{\sigma^2}\right),
	\end{align}
\end{subequations}
\end{small}We first resolve the optimal $p_n^D(t)$. Denote the objective function (\ref{B.2}a) by $J_1$. Taking the derivative of $J_1$ with respect to $p_n^D(t)$, we have 
$J_1'(p_n^D(t))=-\tilde{W}_n^4(t)\tau_n^D(t)\frac{B}{\ln 2}\frac{h_n(t)}{\sigma^2+h_n(t)p_n^D(t)}+V\tau_n^D(t)+\mu$. Setting $J_1'(p_n^D(t))$ equals to zero, we can obtain the null point  $\Omega_1=\frac{\tilde{W}_n^4(t)\tau_n^D(t)B}{\ln 2\left(V\tau_n^D(t)+\mu\right)}-\frac{\sigma^2}{h_n(t)}$. If $\Omega_1<0$, the optimal $p_n^D(t) $ is $0$, which means that TD $n$ shall not transmit data at the current time slot;  If $\Omega_1>\frac{\sigma^2}{h_n(t)}\left(e^{\frac{Q_n^D(t)\ln 2}{B\tau_n^D(t)}}-1\right)$, the optimal $p_n^D(t) $ is $\frac{\sigma^2}{h_n(t)}\left(e^{\frac{Q_n^D(t)\ln 2}{B\tau_n^D(t)}}-1\right)$; otherwise, the optimal $p_n^D(t)=\Omega_1$. Since the sub-gradient of $\mu$ of (B.1a) is $\sum_{n=1}^Np_n^D(t)-P_{MEC}$. Therefore, bisection method can be employed to find the optimal $\mu$.

Then, taking derivative of $J_1$ with respect to $f_n^L(t)$, we can get that
$J_1'(f_n^L(t))=-\frac{\tilde{W}_n^2(t)\tau}{I_n}+\frac{\rho_n\tau}{I_n}+3V\tau\kappa_nf_n^L(t)^2$. If $-\frac{\tilde{W}_n^2(t)\tau}{I_n}+\frac{\rho_n\tau}{I_n}>0$, i.e., $\rho_n-\tilde{W}_n^2(t)>0$, $J_1'(f_n^L(t))>0$, $J_1(f_n^L(t))$ increases with $f_n^L(t)$. In this case, the optimal $f_n^L(t)=0$. If $\rho_n-\tilde{W}_n^2(t)\leq 0$, we have $\Omega_2=\sqrt{\frac{\tilde{W}_n^2(t)-\rho_n}{3VI_n\kappa_n}}$. In this case, the optimal $f_n^L(t)=\min\{\Omega_2,f_n^{\max}\}$. 

Similarly, taking derivative of $J_1$ with respect to $r_n^U(t)$, we have
\begin{small}
\begin{align}
	&J_1'(r_n^U(t))=\left(\frac{a\tilde{W}_n^1(t)}{\beta_n(t)^k}-\frac{\tilde{W}_n^2(t)}{\beta_n(t)}+\tilde{W}_n^3(t)\right)\tau_n^U(t)\nonumber\\
	&+V\tau_n^U(t)\frac{\ln 2\sigma^2}{Bh_n(t)}e^{\frac{r_n^U(t)\ln2}{B}}+\rho_n\tau_n^U(t)\left(\frac{1}{\beta_n(t)}-\frac{a}{\beta_n(t)^k}\right).
\end{align}
\end{small}Denote $\omega_n(t)=\left(\frac{a\tilde{W}_n^1(t)}{\beta_n(t)^k}-\frac{\tilde{W}_n^2(t)}{\beta_n(t)}+\tilde{W}_n^3(t)\right)\tau_n^U(t)+\rho_n\tau_n^U(t)\left(\frac{1}{\beta_n(t)}-\frac{a}{\beta_n(t)^k}\right)$. If $\omega_n(t)\geq 0$, the optimal $r_n^U(t)=0$; otherwise, we have $\Omega_3=B\log_2 \frac{-\omega_n(t)Bh_n(t)}{\tau_n^U(t)\ln2V\sigma^2}$. In this case, the optimal $r_n^U(t)=\min\left\{\Omega_3, B\log_2\left(1+\frac{h_n(t)p_n^{\max}}{\sigma^2}\right)\right\}$. And the sub-gradient of $\rho_n$ is $\tau\frac{f_n^L(t)}{I_n}+\frac{\tau_n^U(t)r_{n}^U(t)}{\beta_n(t)}- Q_n^L(t)-\frac{a\tau_n^U(t)r_n^U(t)}{\beta_n(t)^k}$.  The optimal $\rho_n$ can be obtained by bisection method. The algorithm flow is summarized in Algorithm~\ref{algB}.

\section{Proof of Theorem \ref{coro2}}
\label{}	
\setcounter{equation}{0}
\renewcommand\theequation{C.\arabic{equation}}
The Lagrangian dual function of \eqref{Digestion factor Optimization2} is given by
\begin{small}
   \begin{subequations} \label{C.1}
	\begin{align}
		\hspace{-2em}\min_{\tilde \beta_n(t)}\  & \tau_n^U(t) r_n^U(t)\left(a\tilde{W}_n^1(t)\tilde \beta_n(t)^k-\tilde{W}_n^2(t)\tilde\beta_n(t)\right)\nonumber\\
		&+\xi\bigg(\tau\frac{f_n^L(t)}{I_n}+\tau_n^U(t)r_{n}^U(t)\tilde\beta_n(t) -  Q_n^L(t)-a\tau_n^U(t)r_n^U(t)\nonumber\\
	 &\times\left[k\left(\tilde{\beta}_n^{(r)}(t)\right)^{k-1}\tilde\beta_n(t)+(1-k)\left(\tilde\beta_n^{(r)}(t)\right)^k\right]\bigg),\\
	\hspace{-1em}\textrm{\textrm{s.t.}} \ 	& 1\leq \tilde\beta_n(t)\leq \frac{1}{\beta_n^{\min}},
	\end{align}
\end{subequations}
\end{small}where $\xi$ is the dual variable with respect to (\ref{Digestion factor Optimization2}b). Taking derivative of (\ref{C.1}a), denoted by $J_2$, with respect to $\tilde\beta_n(t)$, we have
$J_2'\left(\tilde\beta_n(t)\right)=\tau_n^U(t)$ $r_n^U(t) \left[k a\tilde{W}_n^1(t)\tilde \beta_n(t)^{k-1}-\tilde{W}_n^2(t)+\xi-\xi a k\left(\tilde{\beta}_n^{(r)}(t)\right)^{k-1}\right]$. If $\small \tau_n^U(t) r_n^U(t) \left[\xi-\tilde{W}_n^2(t)-\xi a k\left(\tilde{\beta}_n^{(r)}(t)\right)^{k-1}\right]\geq 0$, $J_2'$ is always non-negative for $\tilde\beta_n(t)\in[1,\frac{1}{\beta_n^{\min}}]$. Thus, the optimal $\tilde\beta_n(t) = 1$. Otherwise, $J_2'\left(\tilde\beta_n(t)\right)$ has a unique positive null point 
\begin{small}
	\begin{align}
		\Omega_4=\sqrt[k-1]{\frac{\tilde{W}_n^2(t)-\xi+\xi a k\left(\tilde{\beta}_n^{(r)}(t)\right)^{k-1}}{ka\tilde{W}_n^1(t)}}.
	\end{align}
\end{small}In this case, the optimal solution lies either in $\Omega_4$ or boundary points, i.e.,
\begin{small}
\begin{align}
	\tilde\beta_n(t)=\left\{
	\begin{aligned}
		1,\quad\quad\quad\quad \quad\quad \quad & \text{if }  \Omega_4<1;\\
		\min\left\{\Omega_4,\frac{1}{\beta_n^{\min}}\right\},\quad &\text{otherwise}.
	\end{aligned}\right.
\end{align}  
\end{small}Due to $\beta_n(t)=1/\tilde\beta_n(t)$, the optimal $\beta_n(t)$ can be obtained by \eqref{coro2eq}. And the optimal $\xi$ can be obtained by a bisection method similar to that in Algorithm~\ref{algB}.

\section{Proof of Corollary \ref{coro3}}  \label{}	
\setcounter{equation}{0}
\renewcommand\theequation{\textsc{D}.\arabic{equation}}
 
Problem \eqref{remote comp1} can be decomposed into a series of the following problems with the same structure: 
\begin{small}
\begin{subequations}\label{D.2}
	\begin{align}
		\min_{f_n^O(t)}\quad & \varphi_n(t)f_n^O(t)+V\tau\kappa_M f_n^O(t)^3+\nu f_n^O(t),\\
		\textrm{\textrm{s.t.}} \quad
		&0\leq f_n^O(t) \leq \frac{Q_n^O(t)G_n(t)I_n}{\tau},
	\end{align}
\end{subequations}
\end{small}where $\varphi_n(t)=\frac{\tilde{W}_n^4(t)H_n(t)\tau-\tilde{W}_n^3(t)\tau}{G_n(t)I_n}$, $\nu\geq 0$ is Lagrangian multiplier corresponding to constraints (\ref{remote comp1}c). Taking derivation of (\ref{D.2}a), termed as $J_3$, we can obtain that
$J_3'(f_n^O(t))=\varphi_n(t)+\nu+3V\tau\kappa_Mf_n^O(t)^2$. If $\varphi_n(t)+\nu>0$, $J_3'$ is always positive. (\ref{D.2}a) increases with $f_n^O(t)$. Thus, the optimal solution of \eqref{D.2} is $f_n^O(t)=0$. Otherwise, by setting $J_3'(f_n^O(t))$ equal to 0, we    
get $\Omega_5=\sqrt{-\frac{\varphi_n(t)+\nu}{3V\tau\kappa_M}}$. Thus, the optimal solution is $f_n^O(t)=\min\left\{\sqrt{-\frac{\varphi_n(t)+\nu}{3V\tau\kappa_M}},\frac{Q_n^O(t)G_n(t)I_n}{\tau}\right\}$. And the optimal $\nu$ can be obtained by the bisection method. 

	\bibliography{IEEEabrv,Ref}

\end{document}